\documentclass[compmod]{rsl}
\gridframe{N}
\usepackage{amsmath,amssymb,latexsym,natbib}

\volume{0}
\issue{0}

\pyear{}
\pmonth{}
\doinu{}
\setcounter{page}{1}

\usepackage{xspace}
\usepackage{mathrsfs}
\usepackage{centernot}
\usepackage{graphicx}
\usepackage{psfrag}

\newtheorem{conjecture}[thm]{Conjecture}

\newcommand{\ie}{i.e.\ }
\newcommand{\Scenarios}{\textsf{Scenarios}\xspace}

\newcommand{\figref}[1]{Fig.~\ref{#1}\xspace}

\newcommand{\MATH}[1]{\ensuremath{{#1}}\xspace}
\newcommand{\MATHIT}[1]{\MATH{\mathit{#1}}}
\newcommand{\MATHSF}[1]{\MATH{\mathsf{#1}}}
\newcommand{\MATHBIN}[1]{\MATH{\mathbin{#1}}}
\newcommand{\MATHCAL}[1]{\MATH{\mathcal{#1}}}

\newcommand{\MATHFRAK}[1]{\MATH{\mathfrak{#1}}}

\newcommand{\match}{\MATHSF{match}}

\renewcommand{\iff}{\MATHBIN{\;\leftrightarrow\;}}
\newcommand{\Iff}{\MATHBIN{\;\Longleftrightarrow\;}}
\renewcommand{\implies}{\MATHBIN{\;\rightarrow\;}}
\newcommand{\Implies}{\MATHBIN{\;\Longrightarrow\;}}

\newcommand{\scenarios}{\ensuremath{\mathcal{S}}\xspace}

\newcommand{\Model}{\MATHFRAK{M}}

\newcommand{\Law}[1]{\MATH{\mathsf{AllAgree}\langle{#1}\rangle}}

\newcommand{\SPR}{\MATHSF{SPR}}
\newcommand{\SPRS}{\MATH{\mathsf{SPR}(\scenarios)}}
\newcommand{\SPRM}{\MATH{\mathsf{SPR}_{\Model}}}
\newcommand{\SPRm}{\MATH{\mathsf{SPR}_{\Model^-}}}
\newcommand{\SPRp}{\MATH{\mathsf{SPR}_{\Model^+}}}

\newcommand{\wlDefM}{\MATH{\mathsf{wlDef}_{\Model}}}
\newcommand{\SPRIOb}{\MATH{\mathsf{SPR}_{\IOb}}}
\newcommand{\SPRB}{\MATH{\mathsf{SPR}_{\B}}}
\newcommand{\SPRBIOb}{\MATH{\mathsf{SPR}_{\B,\IOb}}}

\newcommand{\Ax}[1]{\MATHSF{Ax{#1}}}
\newcommand{\Axwl}[1]{\Ax{Wl}{(\MATHIT{#1})}}

\newcommand{\Q}{\MATHIT{Q}}
\newcommand{\B}{\MATHIT{B}}
\newcommand{\W}{\MATHSF{W}}
\newcommand{\IOb}{\MATHIT{IOb}}
\newcommand{\Ph}{\MATHSF{Ph}}
\newcommand{\M}{\MATHSF{M}}
\newcommand{\speed}{\MATHSF{speed}}
\newcommand{\vel}{\MATHSF{vel}}

\newcommand{\ev}{\MATHSF{ev}}
\newcommand{\wl}{\MATHSF{wline}}
\newcommand{\w}{\MATHSF{w}}

\newcommand{\obsA}{\MATH{k}}
\newcommand{\obsB}{\MATH{h}}

\title[Three Formalisations of Einstein's Relativity Principle]
      {Three Different Formalisations of Einstein's Relativity Principle}

\author[J. X. Madar{\'a}sz, G. Sz{\'e}kely]
       {JUDIT X. MADAR{\'A}SZ, GERGELY SZ{\'E}KELY}
  \affil{Alfr\'ed R\'enyi Institute of Mathematics, Hungarian Academy of
Sciences
\\[2.5pt]
{\rm and}\\[-5pt]}
\author[M. Stannett]
{MIKE STANNETT}
\affil{Department of Computer Science, The University of Sheffield
}



\leftrunninghead{judit x madar\'asz, mike stannett \& gergely sz\'ekely}
\rightrunninghead{Three Formalisations of Einstein's Relativity Principle}
\begin{document}

\maketitle

\begin{abstract}
We present three natural but distinct formalisations of Einstein's special principle of relativity, and demonstrate the relationships between them. In particular, we prove that they are logically distinct, but that they can be made equivalent by introducing a small number of additional, intuitively acceptable axioms.
\end{abstract}

\section{Introduction}
\label{sec:introduction}

The special principle of relativity (\SPR), which states that the laws of physics should be the same in all inertial frames, has been foundational to physical thinking since the time of Galileo, and gained renewed prominence as Einstein's first postulate of relativity theory 
\citep[\S1]{Ein16}:

\begin{quote}
If a system of coordinates K is chosen so that, in relation to it, physical laws hold good in their simplest form, the \emph{same} laws hold good in relation to any other system of coordinates K' moving in uniform translation relatively to K. This postulate we call the 
``special principle of relativity.''
\end{quote}

Despite its foundational status, the special principle of relativity remains problematic due to its inherent ambiguity 
\citep{Sza04,GS13,GS13b}. What, after all, do we mean by ``physical laws'', and what does it mean to say that the ``same laws'' hold in two different frames?

These ambiguities often lead to misunderstandings and misinterpretations of
the principle. See, e.g., \cite{Muller92} for the resolution of one such misinterpretation.  We believe that formalisation is the best way to
eliminate these ambiguities.  In this paper we investigate the principle of
relativity in an axiomatic framework of mathematical logic.  However, we
will introduce not one but three different naturally arising versions of the
principle of relativity, not counting the parameters on which they depend,
such as the formal language of the framework used.

It is not so surprising, when one tries to capture \SPR formally, that more than one ``natural'' version offers itself -- not only was Einstein's description of his principle given only informally, but its roots reach back to Galileo's even less formal ``ship story'' \citep[pp.~186--187]{Galileo}.

Since all three of the versions we investigate are ``natural'', and simply reflect different approaches to capturing the original idea, there is no point trying to decide which is the ``authentic'' formalisation. The best thing we can do is to investigate how the different formalisations are related to each other. Therefore, in this paper we investigate under which assumptions these formalisations become equivalent.  

In a different framework but with similar motivations, G\" om\"ori and Szab\'o also introduce several formalisations of Einstein's ideas \citep{Sza04,GS13,GS13b,Gom15}. Intuitively, what they refer to as ``covariance'' corresponds to our principles of relativity and what they call the principle of relativity is an even stronger assumption. However, justifying this intuition is beyond the scope of this paper, as it would require us to develop a joint framework in which both approaches can faithfully be interpreted.

\subsection{Contribution}
In this paper we present three logical interpretations ($\SPRM$, $\SPR^+$ and \SPRBIOb) of the relativity principle, and investigate the extent to which they are equivalent. We find that the three formalisations are logically distinct, although they can be rendered equivalent by the introduction of additional axioms. We prove rigorously the following relationships.

\subsubsection{Counter-examples and implications requiring no additional axioms (\figref{fig:fig1})}
\begin{itemize}
\item $\SPR^+ \centernot\Longrightarrow  \SPRM$
	(Thm.~\ref{thm:not1(+)-2})
\item $\SPRBIOb  \centernot\Longrightarrow  \SPRM$
	(Thm.~\ref{thm:not1(+)-2})
\item $\SPRBIOb  \centernot\Longrightarrow   \SPR^+$
	(Thm.~\ref{thm:not1(+)-2})
\item $ \SPRM \Implies \SPR^+$ 
	(Thm.~\ref{thm:2-1.+})
\item $ \SPRM \Implies \SPRBIOb $ 
	(Thm.~\ref{thm:2-3})
\end{itemize}
\begin{figure}[!htb]
\centering
\psfrag{Aut}[c][c]{\SPRM}
\psfrag{SPRBIOb}[c][c]{\SPRBIOb}
\psfrag{SPR+}[c][c]{$\SPR^+$}
\psfrag{thm4}[rt][rt]{Thm.\ref{thm:not1(+)-2}}
\psfrag{thm3}[lb][lb]{Thm.\ref{thm:not1(+)-2}}
\psfrag{thm5}[lt][lt]{Thm.\ref{thm:not1(+)-2}}
\psfrag{thm1}[lb][lb]{Thm.\ref{thm:2-3}}
\psfrag{thm2}[rt][rt]{Thm.\ref{thm:2-1.+}}
\includegraphics[width=0.5\columnwidth]{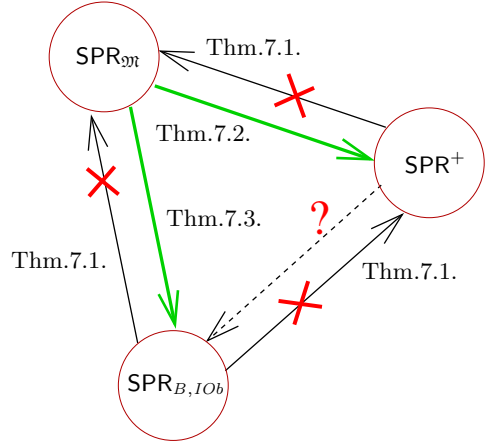}
\caption{Counter-examples and implications requiring no additional axioms.}
\label{fig:fig1}
\end{figure}

\subsubsection{Adding axioms to make the different formalisations equivalent (\figref{fig:fig2})}
\begin{itemize}
\item $\SPR^+ \Implies  \SPRBIOb$  assuming \Ax{Id}, \Ax{Ev}, \Ax{IB}, \Ax{Field} 
	(Thm. \ref{thm:1.+-3.BIOb})
\item  $\SPRBIOb \Implies \SPRM$ assuming $\mathcal{L}=\mathcal{L}_0$, \Ax{Ev}, \Ax{Ext}
	(Thm. \ref{thm:3-2})
\item $\SPR^+\Implies\SPRM$ assuming $\mathcal{L}=\mathcal{L}_0$, \Ax{Id}, \Ax{IB}, \Ax{Field}, \Ax{Ev}, \Ax{Ext}
	(Thms. \ref{thm:1.+-3.BIOb}, \ref{thm:3-2})
\item $\SPRBIOb \Implies\SPR^+$ assuming $\mathcal{L}=\mathcal{L}_0$, \Ax{Ev}, \Ax{Ext}
	(Thms. \ref{thm:2-1.+},~\ref{thm:3-2})
\end{itemize}
\begin{figure}[!htb]
\centering
\psfrag{Aut}[c][c]{\SPRM}
\psfrag{SPRBIOb}[c][c]{\SPRBIOb}
\psfrag{SPR+}[c][c]{$\SPR^+$}
\psfrag{cimke1}[l][l]{\shortstack[l]{\scriptsize $\mathcal{L}=\mathcal{L}_0$,\\ 
\scriptsize \Ax{Id}, \Ax{IB}, \Ax{Field}, \Ax{Ev},  \Ax{Ext} }}
\psfrag{cimke2}[rb][rb]{\shortstack[r]
{\scriptsize  $\mathcal{L}=\mathcal{L}_0$,\\ 
\scriptsize  \Ax{Ev}, \Ax{Ext}}}

\psfrag{Thm7}[rt][rt]{\shortstack[r]{Thm.\ref{thm:3-2}\\
\scriptsize ${}$\\
\scriptsize  $\mathcal{L}=\mathcal{L}_0$,\\ 
\scriptsize  \Ax{Ev}, \Ax{Ext}}}
\psfrag{Thm8}[lt][lt]{\shortstack[l]{Thm.\ref{thm:1.+-3.BIOb}\\ \scriptsize \Ax{Id}, \Ax{Ev}, \Ax{IB}, \Ax{Field}}}\includegraphics[width=.6\textwidth]{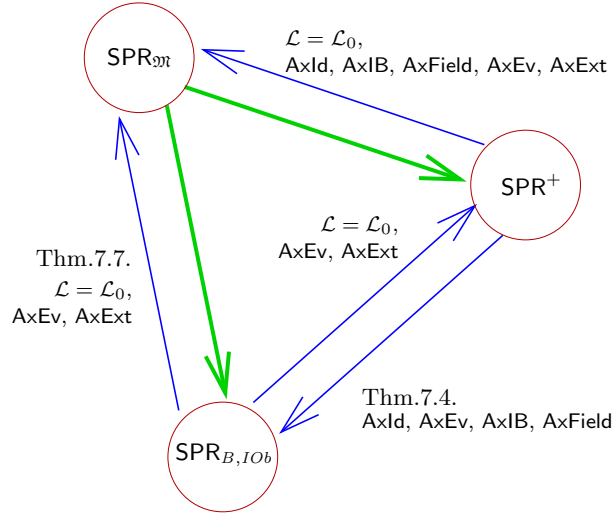}

\caption{Axioms required to make the different formalisations equivalent.}
\label{fig:fig2}
\end{figure}

\subsubsection{\SPRM, $\SPR^+$ and the decomposition of \SPRBIOb into \SPRIOb and \SPRB (\figref{fig:fig3})}
\begin{itemize}
\item  $\SPR^+ \Implies  \SPRIOb$  assuming \Ax{Id},  \Ax{Ev}
	(Thm.~\ref{thm:1.S-3.IOb})
\item  $\SPR^+ \Implies  \SPRB$  assuming \Ax{IB}, \Ax{Field}
	(Thm.~\ref{thm:1.S-3.B})
\end{itemize}
\begin{figure}[!htb]
\centering
\psfrag{Aut}[c][c]{\SPRM}
\psfrag{SPR+}[c][c]{$\SPR^+$}
\psfrag{SPRB}[c][c]{$\SPRB$}
\psfrag{SPRIOb}[c][c]{$\SPRIOb$}
\psfrag{Thm10}[rb][rb]{\shortstack[r]{Thm.\ref{thm:1.S-3.B}\\ \scriptsize\Ax{IB}, \Ax{Field}}}
\psfrag{Thm11}[lt][lt]{\shortstack[l]{Thm.\ref{thm:1.S-3.IOb}\\ \scriptsize\Ax{Id}, \Ax{Ev}}}
\includegraphics[width=0.5\textwidth]{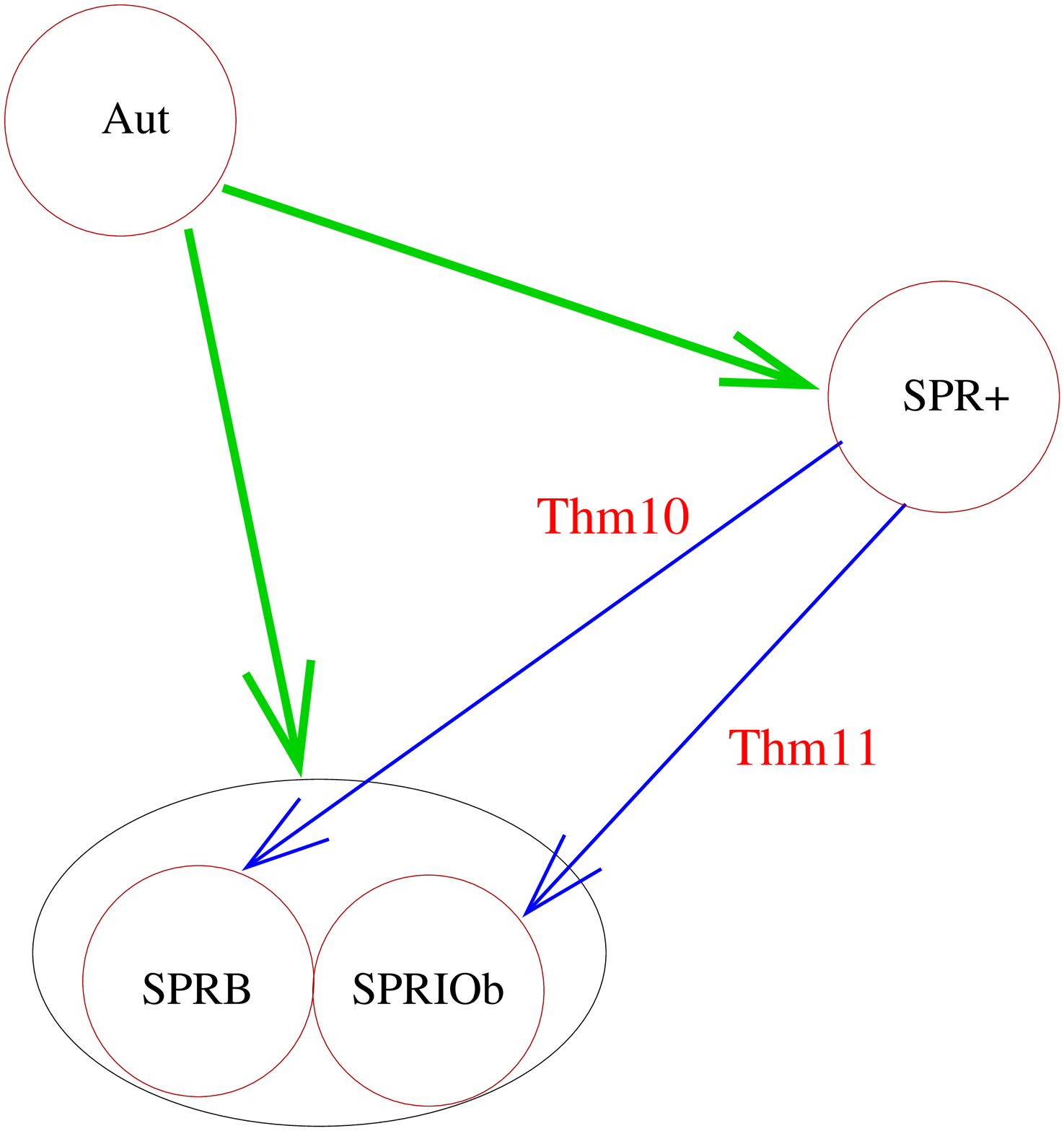}
\caption{\SPRM, $\SPR^+$ and the decomposition of \SPRBIOb into \SPRIOb and \SPRB.}
\label{fig:fig3}
\end{figure}

\paragraph{Outline of the paper.}
We begin in \ref{sec:laws-of-nature} by characterising what we mean by a
``law of nature'' in our first-order logic framework. Rather than going into
all the difficulties of defining what a law of nature \emph{is}, we focus
instead on the requirement that all inertial observers must agree as to the
outcomes of experimental scenarios described by such a law. 
In \ref{sec:examples} we give some examples, in 
\ref{sec:formalisations} we demonstrate our three formalisations of SPR, 
in \ref{sec:models} we discuss the types of  models our language admits, and 
in \ref{sec:axioms} we state the axioms that
will be relevant to our results.  These results are stated formally in \ref{sec:results}  
In \ref{sec:alternatives-to-AxIB} we discuss some
alternative assumptions to axiom \Ax{IB}. 
The proofs of the theorems can be found in \ref{sec:proofs}
We conclude with a discussion of
our findings in \ref{sec:discussion}, where we also highlight questions
requiring further investigation.

\section{Laws of Nature}
\label{sec:laws-of-nature}

Before turning our attention towards formalising the principle of
relativity, we need to present the framework in which our logical formalisms
will be expressed.  
 Following the approach described in
\citep{AMN07,AMNS11}, we will use the first-order logical (FOL) 3-sorted
language
\[
  \mathcal{L}_0 = \{ \IOb, \B, \Q, 0,1,+, \cdot, \W \}
\]
as a core language for kinematics. In this language
\begin{itemize}
\item $\IOb$ is the sort of \emph{inertial observers} (for labeling coordinate systems);
\item $\B$ is the sort of \emph{bodies}, \ie things that move;
\item $\Q$ is the sort of \emph{quantities}, \ie numbers, with 
constants $0$ and $1$,
\emph{addition} ($+$) and \emph{multiplication} ($\cdot$);
\item $\W$ is the \emph{worldview relation}, a 6-ary relation of type $\IOb \times \B \times \Q^4$. 

The statement $\W(\obsA,b,p)$ represents the idea that ``inertial observer $\obsA$ coordinatises body $b$ to be at spacetime location $p$.''
\end{itemize}

Throughout this paper we use $\obsB$, $\obsA$  and their 
variants to represent inertial observers (variables of sort $\IOb$); we use $b$ and 
$c$ to represent bodies (variables of sort $\B$); and $p$, $q$ and $r$ are 
variables of type $\Q^4$.
  The sorts of other variables will be clear from 
context.

Given this foundation, various derived notions can be defined:
\begin{itemize}
\item The event observed by $\obsA \in \IOb$ as occurring at $p \in \Q^4$ is the set of bodies that $\obsA$ coordinatises to be at $p$:
  \[ \ev_{\obsA}(p) \equiv \{b \in \B : \W(\obsA,b,p) \} . \]
  
\item For each $\obsA, \obsB \in \IOb$, the worldview 
transformation
$\w_{\obsA\obsB}$ is a binary relation on $\Q^4$ which captures the idea that $\obsB$ coordinatises at $q \in \Q^4$ the same event that $\obsA$ coordinatises at $p \in \Q^4$:
\begin{equation}
  \w_{\obsA\obsB}(p,q) \equiv [\ev_{\obsA}(p) = \ev_{\obsB}(q)]. 
  \tag{\w.def}
\end{equation}
  
\item The worldline of $b \in \B$ as observed by $\obsA \in \IOb$ is the set of locations $p \in \Q^4$ at which $\obsA$ coordinatises $b$:
  \[ \wl_{\obsA}(b) \equiv \{ p : \W(\obsA,b,p) \} . \]
\end{itemize}

We try to choose our primitive notions as simple and ``observationally
oriented'' as possible, cf.\ \citet[p.31]{Fri83}. 
Therefore the set of events is not primitive, but rather a defined concept,
\ie an event is a set of bodies that an observer observes at a certain
point of its coordinate system.
Motivation for such a definition of event goes back to Einstein and can be
found in \citet[p.6]{MTW73} and \citet[p.153]{Ein16trans}.

Since laws of nature stand or fall according to the outcomes of physical
experiments, we next consider statements, $\phi$, which describe
experimental claims.  For example, $\phi$ might say ``if this equipment has
some specified configuration today, then it will have some expected new
configuration tomorrow''.  This is very much a dynamic process-oriented
description of experimentation, but since we are using the language of
spacetime, the entire experiment can be described as a static
four-dimensional configuration of matter in time and space.  
We therefore introduce the concept of 
\emph{scenarios}, \ie sentences describing both the initial conditions and the outcomes of 
experiments.   Although our scenarios 
are primarily intended to capture experimental configurations and 
outcomes, they can also describe more complex 
situations, as illustrated by the examples in
\ref{sec:examples} 
One of our formalisations of \SPR will be the assertion that 
\emph{all} inertial
observers \emph{agree}
as to whether or not certain situations are realizable. Our definition of
scenarios is
motivated by the desire to have a suitably large set of sentences describing
these
situations.

To introduce scenarios formally, let 
us fix a language \MATHCAL{L} containing our core language
$\mathcal{L}_0$.  We will say that a formula 
\[
\phi \equiv
\phi(\obsA,\bar{x})\equiv\phi(\obsA,x_1,x_2,x_3,\dots)
\]
of language
\MATHCAL{L} describes a \emph{scenario} provided it has a single free
variable $k$ of sort \IOb (to allow us to evaluate the scenario for different
observers), and none of sort \B.  The other free variables $x_1, \dots$ can be thought of as
experimental parameters, allowing us to express such statements as
$\phi(k,v) \equiv$ ``$k$ can see some body $b$ moving with speed $v$''. 
Notice that numerical variables (in this case $v$) can sensibly be included
as free variables here, but bodies cannot -- if we allow the use of specific
individuals (\textit{Thomas}, say) we can obtain formulae (``$k$ can see
\textit{Thomas} moving with speed $v$'') which manifestly violate \SPR, since
we cannot expect \emph{all} observers $k$ to agree on such an assertion.
The truth values of certain formulas containing bodies as free
variables can happen to be independent of inertial observers,
for example $\nu_2$ in \ref{sec:examples}, but we prefer to treat these as 
exceptional cases  to be
proven from the principle of relativity and the rest of the axioms.

Thus $\phi \equiv \phi(\obsA,\bar{x})$ represents a scenario provided
\begin{itemize}
\item $\obsA$ is free in $\phi(\obsA, \bar{x})$,
\item $\obsA$ is the \emph{only} free variable of sort $\IOb$, 
\item the free variables $x_i$ are of sort $\Q$ (or any other sort of \MATHCAL{L} representing mathematical objects), and 
\item there is no free variable of sort \B (or any other sort of \MATHCAL{L} representing physical objects).
\end{itemize}
The set of all scenarios will be denoted by \Scenarios.

Finally, for any formula $\phi(\obsA,\bar{x})$ with free variables $k$
of sort $\IOb$ and $x_1,x_2,\ldots$ of any sorts, the formula
\[
  \Law\phi \quad \equiv \quad (\forall \obsA, \obsB \in \IOb)
\big((\forall \bar{x})[\phi(\obsA,\bar{x}) \iff \phi(\obsB,\bar{x})] \big)
\]
captures the idea that 
for every evaluation of the free variables $\bar x$  
  all inertial observers agree on the truth value of $\phi$.
Let us note that $\Law{}$ is defined not just for
scenarios, e.g., it is defined for the non-scenario examples of
\ref{sec:examples}, too.

In \ref{sec:formalisations}, one of the
formalisations of \SPR  will be that $\Law\phi$ holds
for every possible scenario $\phi$.

\section{Examples}
\label{sec:examples}
Here we give  examples for both scenarios and non-scenarios.  
To be able to show interesting examples beyond the core language used in
this paper, let us expand our language with a 
unary relation $\Ph$ of \emph{light signals (photons)} of type $\B$ and a 
function $\M:\B\rightarrow\Q$ for \emph{rest mass}, i.e.\ $M(b)$ is the rest mass
of body $b$. For illustrative purposes we focus in particular on 
\emph{inertial bodies}, \ie bodies moving with uniform linear motion, 
and introduce the notations 
$\speed_k(b)=v$  and  $\vel_k(b)=(v_1,v_2,v_3)$ to indicate
that $b$ is an inertial body moving with speed $v\in\Q$ and velocity
$(v_1,v_2,v_3)\in\Q^3$
 according to inertial observer $k$.
These notions can be easily defined assuming \Ax{Field} introduced in 
\ref{sec:axioms}, and their
definitions can be found, e.g., in \cite{AMNS08,MSS14}.

In the informal explanations of the examples below we freely use such 
modal
expressions  as ``can set down'' and ``can send out'', in place of ``coordinatise'', to make the
experimental idea behind scenarios intuitively clearer, and to illustrate
how the dynamical aspects of making experiments are captured in
our static framework. See also \citep{MoSze15} for a framework where the distinction
between actual and potential bodies is elaborated within first-order modal
logic.

\medskip

\textbf{Examples for scenarios:}
\begin{itemize}
\item Inertial observer $\obsA$ can 
set down a body at spacetime
location
$(0,0,0,0)$: 
\[
\phi_1(\obsA)\equiv (\exists b) \W(\obsA,b,0,0,0,0).
\]
\item Inertial observer $\obsA$ can send out an inertial body with speed $v$:
\[
\phi_2(\obsA,v)\equiv (\exists b)\speed_\obsA(b)=v.
\]
\item Inertial observer $\obsA$ can send out an inertial body at 
location $(x_1,x_2,x_3,x_4)$ with velocity $(v_1,v_2,v_3)$ and rest mass $m$:
\begin{multline*}
\phi_3(\obsA,x_1,x_2,x_3,x_4,v_1,v_2,v_3,m) \equiv\\
 (\exists b)
[W(\obsA,b,x_1,x_2,x_3,x_4)\land \vel_\obsA(b)=(v_1,v_2,v_3)\land\M(b)=m].
\end{multline*}.
\item The speed of every light signal is $v$ according to inertial observer $\obsA$: 
\[
\phi_4(\obsA,v)\equiv (\forall b)[\Ph(b)\rightarrow \speed_\obsA(b)=v].
\]
\end{itemize}

Let us consider scenario $\phi_4$ to illustrate that 
$\Law{\phi}$ means that all inertial observers agree on the truth 
value of $\phi$ for every evaluation of the free variables $\bar x$. 
Assume that the speed of light is $1$ for
every inertial observer, i.e.\ that $(\forall \obsA)(\forall
b)[\Ph(b)\rightarrow\speed_\obsA(b)=1]\land(\exists b)\Ph(b)$ holds. Then the truth
value of $\phi_4(\obsA,1)$ is true for every inertial observer $\obsA$, 
but the truth value of $\phi_4(\obsA,a)$ is false for every inertial observer 
$\obsA$ if
$a\neq 1$. Thus $\Law{\phi_4}$ holds.

\medskip
\textbf{Examples for non-scenarios:}

\begin{itemize}
\item  The speed of inertial body $b$ according to inertial observer $k$ is $v$:
\[
\nu_1(k,v,b)\equiv \speed_k(b)=v.
\]
\end{itemize}

Then $\Law{\nu_1}$ means that all inertial observers agree on the
speed of each body.
Obviously, we do not want 
such statements to hold.

Notice, incidentally, that it is possible for all observers to agree 
as to the truth value of a non-scenario, but this is generally 
something we need to prove, rather than assert a priori. 
For example, consider the non-scenario:
\begin{itemize}
\item
The speed of light signal $b$ is $v$ according to inertial observer $k$:
\[
\nu_2(k,v,b)\equiv \Ph(b)\rightarrow \speed_k(b)=v.
\]
\end{itemize}

Then $\Law{\nu_2}$ means
that all inertial observers agree
on the speed of each light signal, and it
happens to follow from
$\Law{\phi_4}$, where scenario $\phi_4$ is given above.
Therefore, \Law{\nu_2}  will follow from our formalisations of \SPR
which entail the truth of formula \Law{\phi_4}.

While different observers 
agree on the speed of any given photon in special relativity theory, 
they do not agree as to its direction of motion, which is captured by the
following non-scenario:
\begin{itemize}
\item The velocity of light signal $b$ is $(v_1,v_2,v_3)$ according to inertial
observer
$k$:
\[
\nu_3(k,v_1,v_2,v_3,b) \equiv \Ph(b)\rightarrow \vel_k(b)=(v_1,v_2,v_3).
\] 
\end{itemize}

 Then $\Law{\nu_3}$ means that all inertial observers
agree on the velocity of each light signal, 
and once again we do not want such a formula to hold.

\section{Three formalisations of SPR}
\label{sec:formalisations}

\paragraph{First formalisation.}
A natural interpretation of the special principle is to identify a set \scenarios of scenarios on which all inertial observers should agree (\ie those scenarios we consider to be experimentally relevant). If we now define
\[
  \SPRS \equiv \{ \Law\phi : \phi \in \scenarios \} ,
  \tag{1.\scenarios} 
\]
the principle of relativity becomes the statement that every formula in \SPRS holds. For example, if we assume that all inertial observers agree on \emph{all} scenarios, and define
\[
  \SPR^+ \equiv \{ \Law\phi : \phi\in\Scenarios \} ,
  \tag{1.+}
\]
then we get a ``strongest possible'' version of \SPRS formulated in the language \MATHCAL{L}.

It is important to note that the power of \SPRS (and hence that of $\SPR^+$) strongly depends on which language \MATHCAL{L} we use. It matters, for example, whether we can only use \MATHCAL{L} to express scenarios related to kinematics, or whether we can also discuss particle dynamics, electrodynamics, etc.  The more expressive \MATHCAL{L} is, the stronger the corresponding principle becomes.

\paragraph{Second formalisation.} A natural indirect approach is to assume
that the worldviews of any two inertial observers are identical.  In other
words, given any model $\Model$ of our language \MATHCAL{L}, and given any
observers $\obsA$ and $\obsB$, we can find an automorphism of the model
which maps $\obsA$ to $\obsB$, while leaving all quantities 
(and
elements of all the other sorts of \MATHCAL{L} representing mathematical 
objects)
fixed.  That is, if the only sort of \MATHCAL{L} representing 
mathematical objects is $\Q$,
we require the statement \[
  \SPRM \equiv (\forall \obsA, \obsB \in \IOb)(\exists \alpha \in
  Aut(\Model)) [\alpha(\obsA) = \obsB \land
  \alpha\mathclose{\upharpoonright_Q} = Id_Q] \tag{2}
\]
to hold, where $Id_Q$ is the identity function on $\Q$, and $\alpha\mathclose{\upharpoonright_Q}$ denotes the restriction of $\alpha$ to the quantity part of the model. 
If \MATHCAL{L} has other sorts 
representing mathematical objects than \Q, then in (2) we also require
$\alpha\mathclose{\upharpoonright_U} = Id_U$ to hold for any such sort $U$.

\paragraph{Third formalisation.}
Another way to characterise the special principle of relativity is to assume that all inertial observers agree as to how they stand in relation to bodies and each other (see \figref{fig:SPRB,SPRIOb}). In other words, we require the formulae
\begin{equation}
  \SPRB \equiv (\forall \obsA,\obsA')(\forall b)(\exists b') [ \wl_{\obsA}(b) = \wl_{\obsA'}(b') ] 
  \tag{3.\B}
\end{equation}
and
\begin{equation}
  \SPRIOb \equiv (\forall \obsA,\obsA')(\forall \obsB)(\exists \obsB')[ \w_{\obsA\obsB} = \w_{\obsA'\obsB'} ] 
  \tag{3.\IOb}
\end{equation}
to be satisfied. We will use the following notation:
\[\SPRBIOb = \{\SPRB, \SPRIOb\}.\]

\SPRBIOb is only a ``tiny kinematic slice'' of \SPR. It says
that two inertial observers are indistinguishable by possible world-lines 
and by their relation to other observers.

\begin{figure}[!htb]
\centering
\psfrag{k}[r][r]{$\forall k$}
\psfrag{h}[r][r]{$\forall h$}
\psfrag{k'}[r][r]{$\forall k'$}
\psfrag{h'}[r][r]{$\exists h'$}
\psfrag{b}[l][l]{$\forall b$}
\psfrag{b'}[l][l]{$\exists b'$}
\psfrag{w}[r][r]{$\w_{kh}$}
\psfrag{w'}[r][r]{$\w_{k'h'}$}
\includegraphics[width=0.6\textwidth]{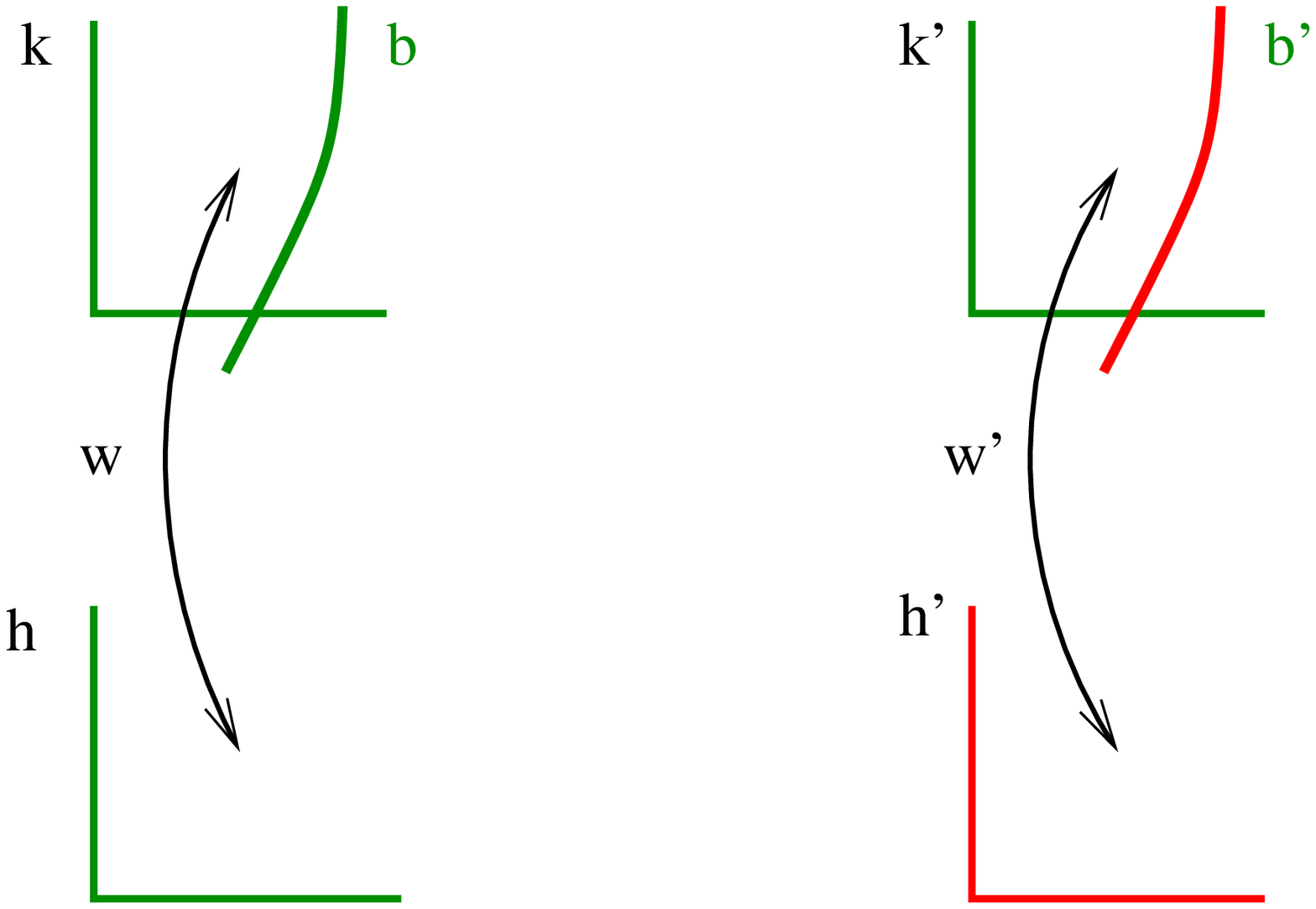}
\caption{Illustration for \SPRB and \SPRIOb.}
\label{fig:SPRB,SPRIOb}
\end{figure}

\section{Models satisfying SPR}
\label{sec:models}

According to Theorems~\ref{thm:2-1.+} and \ref{thm:2-3} below, \SPRM is the
strongest of our three formalisations of the relativity principle, since it
implies the other two without any need for further assumptions.

The `standard' model of special relativity satisfies \SPRM and
therefore also $\SPR^+$ and \SPRBIOb,  where by
the `standard' model we mean a model determined up to isomorphisms
by the following properties: (\textit{a}) the structure of quantities is
isomorphic to that of real numbers; (\textit{b}) all the worldview
transformations are Poincar\'e transformations; (c) for every inertial observer
$k$ and Poincar\'e transformation $P$, there is another observer $h$, such
that $\w_{kh}=P$; (\textit{d}) bodies can move exactly on the smooth
timelike and lightlike curves; and (\textit{e}) worldlines uniquely determine
bodies and worldviews uniquely determine inertial observers.

In fact, there are several models satisfying 
\SPRM in the literature, and these
models also ensure that the axioms used in this paper are mutually
consistent.  Indeed, in \citep{Sze13,AMNSS14,SM14} we have demonstrated
several extensions  
of the `standard' model of special relativity which
satisfy \SPRM.  Applying the methods used in those papers, it is not
difficult to show that \SPRM is also consistent with classical kinematics. 
Again, this is not surprising as there are several papers in the literature
showing that certain formalisations of the principle of relativity cannot
distinguish between classical and relativistic kinematics, and as
\cite{Ig10,Ig11} has shown, when taken together with other
assumptions, \SPR implies that the group of transformations between inertial
observers can only be the Poincar\'e group or the inhomogeneous Galilean
group. For  further developments of this theme, see
\citep{LM76,Bo,Pal03,AM15}.

The simplest way to get to special relativity from \SPRM is to extend the
language $\mathcal{L}_0$ with light signals and assume Einstein's light
postulate, \ie light signals move with the same speed in every direction
with respect to \emph{at least one} inertial observer.  Then, by \SPRM,
\Ax{Ph} follows, \ie light signals move with the same speed in every
direction according to \emph{every} inertial observer.  \Ax{Ph}, even
without any principle of relativity, implies (using only some trivial
auxiliary assumptions such as \Ax{Ev} (see p.~\pageref{axev})), that the
transformations between inertial observers are Poincar\'e transformations;
see, e.g., \citep[Thm.2.2]{AMNS11}.

It is worth noting that \SPRM also admits models which extend the `standard'
model of special relativity, for example models containing faster-than-light
bodies which can interact dynamically with one another
\citep{Sze13,AMNSS14,SM14}.

\section{Axioms}
\label{sec:axioms}

We now define various auxiliary axioms. As we show below, whether or not two formalisations of SPR are equivalent depends to some extent on which of these axioms one considers to be valid.

In these axioms, the spacetime origin is the point $e_0 = (0,0,0,0)$, and the unit points along each axis are defined by $e_1 = (1,0,0,0)$, $e_2 = (0,1,0,0)$, $e_3 = (0,0,1,0)$ and $e_4 = (0,0,0,1)$. We call $e_0, \dots, e_4$ the \emph{principal
locations}.

\begin{description}
\item[\Ax{Ev}]\label{axev} All observers agree as to what can be observed. If $\obsA$ can observe an event somewhere, then $\obsB$ must also be able to observe that event somewhere:
  \[ (\forall \obsA,\obsB)(\forall p)(\exists q)[ \ev_{\obsA}(p) = \ev_{\obsB}(q) ] . \]
\end{description}

\begin{description}  
\item[\Ax{Id}] If $\obsA$ and $\obsB$ agree as to what's happening at each
of the 5 principal 
locations, then they agree as to what's happening everywhere
(see \figref{fig:AxId}):
\[ (\forall \obsA,\obsB)
  \big(
   (\forall i \in \{0,\dots,4\})[ \ev_{\obsA}(e_i) = \ev_{\obsB}(e_i) ] 
  \implies 
   (\forall p)[ \ev_{\obsA}(p) = \ev_{\obsB}(p) ]  \big) .
  \]
We can think of this axiom as a generalised form of the assertion that all 
worldview  transformations are 
affine transformations.
\begin{figure}
\centering
\psfrag{k}[l][l]{$k$}
\psfrag{h}[r][r]{$h$}
\psfrag{e1}[rt][rt]{$e_1$}
\psfrag{e2}[lt][lt]{$e_2$}
\psfrag{e3}[t][t]{$e_3$}
\psfrag{e0}[rb][rb]{$e_0$}
\psfrag{p}[lb][lb]{$p$}
\includegraphics[width=0.7\textwidth]{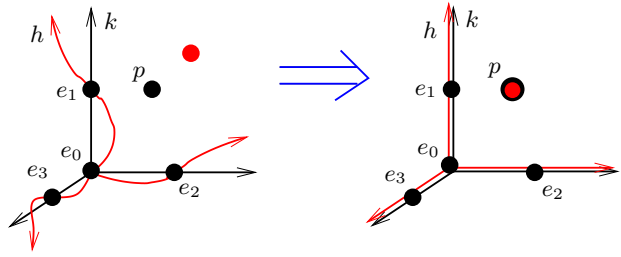}
\caption{Schematic representation of \Ax{Id}: If $\obsA$ and $\obsB$ agree as to what's happening at each of the 5 principal locations, then they agree as to what's happening at every location $p$.}
\label{fig:AxId}
\end{figure}

\item[\Ax{ExtIOb}] If two inertial observers coordinatise exactly the same events at every possible location, they are actually the same observer:
  \[ (\forall \obsA,\obsB)
    \big(  (\forall p)[ \ev_{\obsA}(p) = \ev_{\obsB}(p) ] \implies \obsA = \obsB \big) . \]

\item[\Ax{ExtB}] If two bodies have the same worldline (as observed by any observer $\obsA$), then they are actually the same body:
  \[ (\forall \obsA)(\forall b,b')
    [ \wl_{\obsA}(b) = \wl_{\obsA}(b') \implies b = b' ] . \]

\item[\Ax{Ext}] We write this as shorthand for $\Ax{ExtB} \land \Ax{ExtIOb}$.

\item[\Ax{Field}] $(\Q, 0,1,+, \cdot)$ satisfies the most fundamental properties of $\mathbb{R}$, \ie it is a field (in the sense of abstract algebra; see, e.g., \citep{Ste09}).
\label{axfield}
\end{description}

Notice that we do not assume a priori that \Q is the 
field $\mathbb{R}$ of real numbers, because we do not know and cannot determine experimentally whether the structure of quantities in the real world is isomorphic to that of $\mathbb{R}$. 
Moreover, using arbitrary fields makes our findings more general.

\begin{description}
\item[\Ax{IB}] All bodies (considered) are inertial, \ie their worldlines are straight lines according to every inertial observer:
  \[ (\forall \obsA)(\forall b)(\exists p,q) 
[ q \neq e_0 \land\wl_{\obsA}(b) = \{ p + \lambda q : \lambda \in \Q \} ] .
\]
\end{description}

\Ax{IB} is a strong assumption. In section 
\ref{sec:alternatives-to-AxIB}, we introduce
generalisations of \Ax{IB} allowing accelerated bodies, too. We choose to
include \Ax{IB} in our main theorems because it is arguably the simplest and 
clearest of these generalisations. The main generalisation \wlDefM of \Ax{IB} 
is a meta-assumption and the others are
quite technical assertions which are easier to  understand in 
relation to \Ax{IB}.

\section{Results}
\label{sec:results}

If \Model is some model for a FOL language $\mathcal{L}$, and $\Sigma$ 
is some collection of logical formulae in that language, we write 
$\Model \vDash \Sigma$ to mean that every $\sigma \in \Sigma$ is valid when 
interpreted within \Model. If $\Sigma_1, \Sigma_2$ are both collections of 
formulae, we write $\Sigma_1 \vDash \Sigma_2$ to mean that 
\[
  \Model \vDash \Sigma_2 \quad \text{ whenever } \quad \Model \vDash \Sigma_1
\]
holds for every model \Model of $\mathcal{L}$. For a general 
introduction to logical models, see
\citep{Men15,Mar02}.

Theorem~\ref{thm:not1(+)-2} demonstrates that our three formalisations of
the principle of relativity are logically distinct.  It is worth noting that
based on the ideas used in the proof of Theorem~\ref{thm:not1(+)-2} it is
also easy to construct sophisticated counterexamples to their
equivalence extending the
`standard' model of special relativity.

%

\begin{thm} The formalisations \SPRM, $\SPR^+$ and \SPRBIOb are logically distinct:
\label{thm:not1(+)-2}
\end{thm}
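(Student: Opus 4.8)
The plan is to establish the three non-implications listed in Figure~\ref{fig:fig1} (namely $\SPR^+ \centernot\Longrightarrow \SPRM$, $\SPRBIOb \centernot\Longrightarrow \SPRM$, and $\SPRBIOb \centernot\Longrightarrow \SPR^+$) by constructing explicit models, since these three separations together with the implications $\SPRM \Implies \SPR^+$ and $\SPRM \Implies \SPRBIOb$ (Thms.~\ref{thm:2-1.+},~\ref{thm:2-3}) immediately witness that no two of the three formalisations coincide. The two non-implications into $\SPRM$ are the conceptually important ones: I would look for a single model \Model that satisfies both $\SPR^+$ and \SPRBIOb\ but fails \SPRM. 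This would simultaneously kill $\SPR^+ \Implies \SPRM$ and $\SPRBIOb \Implies \SPRM$ with one construction, which is the natural economical approach.

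The key idea for separating $\SPRM$ from the other two is that $\SPR^+$ and \SPRBIOb\ are \emph{internal}, first-order-definable agreement conditions, whereas \SPRM\ demands an actual \emph{automorphism} of the model carrying any observer to any other. So I would engineer a model in which every two inertial observers are genuinely indistinguishable by any scenario and by their worldline/worldview relationships (forcing $\SPR^+$ and \SPRBIOb), yet in which the observers come in distinguishable ``flavours'' that no automorphism fixing \Q\ can interchange. A clean way to do this is to take the standard relativistic model and attach an extra, automorphism-rigid label to observers (for instance, a second copy of bodies or an auxiliary relation) that is invariant under all worldview transformations and hence invisible to scenarios and to $\w_{\obsA\obsB}$ and $\wl_\obsA$, but which pins observers down against any automorphism. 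The delicate point is verifying that the extra structure really is invisible to scenarios: one must check that no formula $\phi(\obsA,\bar x)$ with the scenario restrictions (only a single free $\IOb$-variable, no free $\B$-variable) can detect the label, so that $\Law{\phi}$ holds for all $\phi \in \Scenarios$.

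For the third separation, $\SPRBIOb \centernot\Longrightarrow \SPR^+$, I would exploit the fact that \SPRBIOb\ is only the ``tiny kinematic slice'' mentioned after its definition: it constrains worldlines and inter-observer relations but says nothing about richer $\scenarios$-expressible content. Here I expect it is easiest to work in an \emph{expanded} language \MATHCAL{L} (the paper stresses that the strength of $\SPR^+$ depends on \MATHCAL{L}), adding a predicate or function that two observers agree on at the level of $\w$ and $\wl$ but disagree on at the level of some scenario $\phi \in \Scenarios$. Concretely I would attach to each observer a numerical attribute that is expressible by a scenario (hence breaks some $\Law{\phi}$) yet does not affect any worldline or any worldview transformation (hence preserves both \SPRB\ and \SPRIOb).

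The main obstacle, in all three cases, is proving the \emph{negative} halves rigorously: for the failures of \SPRM\ one must show that \emph{no} \Q-fixing automorphism maps the chosen $\obsA$ to $\obsB$, which requires identifying an automorphism-invariant that separates them; and for the satisfaction of $\SPR^+$ one must quantify over \emph{all} scenarios, so the verification cannot be done formula-by-formula but needs a structural argument — ideally a back-and-forth or a partial-automorphism argument showing that any two observers are mapped to each other by an automorphism of the \emph{reduct} to the scenario-relevant vocabulary, so that all scenarios necessarily agree while the full model remains rigid. Reconciling these two demands — enough symmetry to force $\SPR^+$/\SPRBIOb, enough rigidity to defeat \SPRM — within one model is the heart of the construction, and getting the extra structure to sit exactly in the gap between ``scenario-visible'' and ``automorphism-detectable'' is where the real care is needed.
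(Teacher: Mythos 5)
Your high-level strategy matches the paper's --- build countermodels that exploit the tension between first-order indistinguishability of observers (which forces $\SPR^+$ and \SPRBIOb) and automorphism rigidity (which kills \SPRM) --- and your sketch for $\SPRBIOb \centernot\Longrightarrow \SPR^+$ via an expanded language is workable, though weaker than the paper's, which stays inside $\mathcal{L}_0$: the paper duplicates a single body and detects the duplication with the scenario $\varphi(m)\equiv(\forall b,c)[\wl_m(b)=\wl_m(c)=\{(0,0,0,t): t\in\Q\}\implies b=c]$. The genuine gap is in $\SPR^+ \centernot\Longrightarrow \SPRM$, exactly at the point you yourself flag as ``where the real care is needed'': you never supply a mechanism that is provably invisible to \emph{all} scenarios yet visible to automorphisms, and both of your candidate mechanisms fail. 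An auxiliary relation $R$ on observers is part of the language $\mathcal{L}$, so $\phi(\obsA)\equiv R(\obsA)$ is itself a scenario and $\Law{\phi}$ fails; your inference ``invariant under all worldview transformations, hence invisible to scenarios'' is a non sequitur, because scenarios are arbitrary first-order formulas with one free $\IOb$-variable, not worldview-invariant properties. And ``a second copy of bodies'' is precisely what the paper uses to \emph{break} $\SPR^+$ in its second countermodel: first-order logic counts finite multiplicities, so a duplicated body is scenario-detectable.

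The missing idea is that the automorphism-invariant separating the observers must be a difference between \emph{infinite cardinalities}. The paper adds uncountably many copies of one body $b$ (whose worldline differs for two observers) and countably many copies of every other body. An automorphism is a bijection, so it cannot carry a body class of uncountable multiplicity onto one of countable multiplicity, and \SPRM fails; but no first-order formula can tell the two infinite multiplicities apart. Even then, verifying $\SPR^+$ is not done by a back-and-forth on ``the reduct to the scenario-relevant vocabulary'' --- there is no such reduct, since scenarios may use the entire language. Instead the paper extends $\Model$ to a model $\Model^+$ in which every body has equally many (uncountably many) copies, verifies $\Model\preceq\Model^+$ by the Tarski--Vaught test (the extra copies make any witness replaceable by one already in $\Model$ via an automorphism of $\Model^+$ swapping two copies), observes that $\Model^+\vDash\SPRp$ and hence $\Model^+\vDash\SPR^+$ by Theorem~\ref{thm:2-1.+}, and pulls the sentences of $\SPR^+$ back down to the elementary substructure $\Model$. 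Without the cardinality trick and this elementary-extension argument, your plan does not close.
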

\begin{itemize}
\item $\Model \vDash \SPR^+ ~~ \centernot\Longrightarrow ~~ \SPRM$. $\square$
\item $\Model \vDash \SPRBIOb ~~ \centernot\Longrightarrow ~~ \SPRM$. $\square$
\item $\Model\vDash\SPRBIOb ~~ \centernot\Longrightarrow ~~
\Model\vDash\SPR^+$. $\square$
\end{itemize}

By Theorems~\ref{thm:2-1.+} and \ref{thm:2-3}, \SPRM is the strongest version of the three formalisations since it implies the other two without any extra assumptions.

\begin{thm}
$ \SPRM \Implies \Model \vDash \SPR^+$. $\square$
\label{thm:2-1.+}
\end{thm}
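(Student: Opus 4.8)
The plan is to reduce everything to the standard model-theoretic fact that automorphisms are elementary maps, i.e.\ they preserve satisfaction of every first-order formula. Unfolding the definitions, $\Model \vDash \SPR^+$ means $\Model \vDash \Law{\phi}$ for each $\phi \in \Scenarios$, so I would fix an arbitrary scenario $\phi \equiv \phi(\obsA, \bar{x})$ and show $\Model \vDash \Law{\phi}$. By the definition of $\Law{}$, this amounts to proving that for all inertial observers $\obsA, \obsB$ and every evaluation of the parameters $\bar{x}$, the biconditional $\phi(\obsA, \bar{x}) \iff \phi(\obsB, \bar{x})$ holds in $\Model$.

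So first I would fix such $\obsA$, $\obsB$ together with an evaluation of $\bar{x}$, and appeal to $\SPRM$ to obtain an automorphism $\alpha \in \mathrm{Aut}(\Model)$ with $\alpha(\obsA) = \obsB$ that restricts to the identity on every sort of \MATHCAL{L} representing mathematical objects (in particular on $\Q$). The second step is to invoke preservation: since $\alpha$ is an automorphism, $\Model \vDash \phi[\obsA, \bar{x}]$ if and only if $\Model \vDash \phi[\alpha(\obsA), \alpha(\bar{x})]$.

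The third and decisive step is to evaluate $\alpha(\obsA)$ and $\alpha(\bar{x})$. Here the definition of scenario does exactly the work it was designed for: the only free variable of $\phi$ of sort $\IOb$ is $\obsA$, there is no free variable of sort $\B$ (or any other physical sort), and every remaining free variable $x_i$ is of a mathematical sort. Consequently $\alpha(\obsA) = \obsB$ while $\alpha$ fixes the entire evaluation of $\bar{x}$, so the preceding biconditional becomes $\Model \vDash \phi[\obsA, \bar{x}]$ iff $\Model \vDash \phi[\obsB, \bar{x}]$, which is precisely what we needed. As $\obsA, \obsB$ and the evaluation of $\bar{x}$ were arbitrary, $\Model \vDash \Law{\phi}$ follows, and since $\phi$ was an arbitrary scenario we conclude $\Model \vDash \SPR^+$.

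I do not expect any genuinely hard step: the argument is short once the right lemma (automorphisms preserve first-order truth) is in hand. The only point requiring care --- and the real conceptual crux --- is the verification that $\alpha$ fixes the \emph{whole} parameter tuple $\bar{x}$. This succeeds solely because scenarios are forbidden from carrying free variables of sort $\B$; were such a variable permitted, $\alpha$ could move the corresponding body and the biconditional could fail, exactly as the non-scenario $\nu_1$ of \ref{sec:examples} illustrates. Thus the substantive content lives entirely in the definition of \Scenarios rather than in the deduction itself.
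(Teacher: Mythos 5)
Your proposal is correct and follows essentially the same route as the paper's own proof: obtain the automorphism $\alpha$ from \SPRM, use the fact that automorphisms preserve first-order satisfaction, and observe that $\alpha(\obsA)=\obsB$ while $\alpha$ fixes the parameter tuple $\bar{x}$ because scenarios have no free variables of physical sorts. Your closing remark correctly identifies that the substantive content lies in the definition of \Scenarios, which is exactly the point the paper's discussion of non-scenarios such as $\nu_1$ is making.
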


\begin{thm}
$ \SPRM \Implies \Model \vDash  \SPRBIOb $. $\square$
\label{thm:2-3}
\end{thm}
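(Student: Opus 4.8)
The plan is to obtain both conjuncts $\SPRB$ and $\SPRIOb$ of $\SPRBIOb$ directly from the quantity-fixing automorphism supplied by $\SPRM$. The one fact I would lean on throughout is that any $\alpha \in Aut(\Model)$ preserves the worldview relation $\W$, and that if in addition $\alpha\mathclose{\upharpoonright_Q} = Id_Q$ then $\alpha$ fixes every point of $\Q^4$ (each coordinate lies in $\Q$, so $\alpha(p) = p$). Together these say that $\alpha$ transports the defined notions $\ev$, $\wl$ and $\w$ faithfully, comparing them at the very same coordinate points.

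For $\SPRB$ I would fix arbitrary $\obsA, \obsA' \in \IOb$ and a body $b$, use $\SPRM$ to choose $\alpha \in Aut(\Model)$ with $\alpha(\obsA) = \obsA'$ and $\alpha\mathclose{\upharpoonright_Q} = Id_Q$, and set $b' = \alpha(b)$. For each $p \in \Q^4$, preservation of $\W$ gives $\W(\obsA, b, p) \iff \W(\alpha(\obsA), \alpha(b), \alpha(p))$, which simplifies to $\W(\obsA, b, p) \iff \W(\obsA', b', p)$ because $\alpha(\obsA) = \obsA'$, $\alpha(b) = b'$ and $\alpha(p) = p$. Reading off the defining sets then yields $\wl_{\obsA}(b) = \wl_{\obsA'}(b')$, which is exactly what $\SPRB$ demands.

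For $\SPRIOb$ I would fix $\obsA, \obsA' \in \IOb$ and an observer $\obsB$, choose $\alpha$ as before with $\alpha(\obsA) = \obsA'$, and put $\obsB' = \alpha(\obsB)$. The same computation shows that $\alpha$ restricts to a bijection of $\ev_{\obsA}(p)$ onto $\ev_{\obsA'}(p)$, and of $\ev_{\obsB}(q)$ onto $\ev_{\obsB'}(q)$, for all $p, q \in \Q^4$. Since $\alpha$ is a bijection on bodies it preserves equality of body-sets in both directions, so $\ev_{\obsA}(p) = \ev_{\obsB}(q)$ holds iff $\ev_{\obsA'}(p) = \ev_{\obsB'}(q)$; that is, $\w_{\obsA\obsB}(p,q) \iff \w_{\obsA'\obsB'}(p,q)$ for every $p, q$, and hence $\w_{\obsA\obsB} = \w_{\obsA'\obsB'}$.

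Once the automorphism is produced the argument is routine, so I do not anticipate a genuine obstacle; the step I would be most careful to spell out is the set-equality preservation in the $\SPRIOb$ case. There I use that $\alpha$ is a \emph{bijection} on bodies -- not merely a homomorphism -- to transport equality of event-sets in \emph{both} directions, which is precisely why $\SPRM$ requires a full automorphism rather than an endomorphism. I would similarly make the small observation that fixing $\Q$ pointwise forces $\alpha$ to fix all of $\Q^4$, since this is what allows the transported worldlines and worldview transformations to be compared coordinate-for-coordinate.
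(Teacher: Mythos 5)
Your proposal is correct and follows essentially the same route as the paper's own proof: use the quantity-fixing automorphism $\alpha$ from \SPRM with $\alpha(\obsA)=\obsA'$, set $b'=\alpha(b)$ and $\obsB'=\alpha(\obsB)$, and observe that $\alpha$ fixes $\Q^4$ and preserves the defined notions $\wl$ and $\w$. The only (purely stylistic) difference is that you unfold the preservation of the defined relations into explicit reasoning about $\W$ and the event-sets, whereas the paper invokes the general fact that automorphisms preserve every defined relation.
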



Theorem \ref{thm:1.+-3.BIOb} tells us that $\SPR^+$ can be made as powerful as \SPRBIOb by adding additional axioms.  This is an immediate consequence of Theorems \ref{thm:1.S-3.IOb} and \ref{thm:1.S-3.B}

\begin{thm} 
$\SPR^+ \cup \{ \Ax{Id}, \Ax{Ev}, \Ax{IB}, \Ax{Field} \} \vDash \SPRBIOb $. $\square$
\label{thm:1.+-3.BIOb}
\end{thm}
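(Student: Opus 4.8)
The statement is a straightforward combination once the right decomposition is made: since $\SPRBIOb = \{\SPRB, \SPRIOb\}$, it suffices to derive $\SPRIOb$ and $\SPRB$ separately from $\SPR^+$ together with the listed axioms. In fact the plan is to show that $\SPRIOb$ needs only \Ax{Id} and \Ax{Ev}, while $\SPRB$ needs only \Ax{IB} and \Ax{Field}; these are exactly the two decomposition results (Thms.~\ref{thm:1.S-3.IOb} and \ref{thm:1.S-3.B}), and combining them yields the theorem. The common engine in both halves is that $\SPR^+$ lets us transfer any scenario between observers: if $\phi(\obsA,\bar x)$ is a scenario and $\phi(\obsA,\bar a)$ holds for some observer at some quantity tuple $\bar a$, then $\Law{\phi}$ forces $\phi$ to hold at the same $\bar a$ for every observer. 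The whole difficulty is therefore to express ``$\wl_{\obsA}(b)$ is a prescribed worldline'' and ``$\w_{\obsA\obsB}$ is a prescribed transformation'' using only quantity-valued parameters, so that $\SPR^+$ applies.

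For $\SPRB$ I would argue as follows. Fix $\obsA,\obsA'$ and a body $b$. By \Ax{IB} the worldline $\wl_{\obsA}(b)$ is a straight line, so (using \Ax{Field} to form the affine combinations) there are $p,q\in\Q^4$ with $q\neq e_0$ and $\wl_{\obsA}(b)=\{p+\lambda q:\lambda\in\Q\}$. Consider
\[
  \psi(\obsA,p,q)\;\equiv\;(\exists b)(\forall r)\big[\W(\obsA,b,r)\iff(\exists\lambda)\,r=p+\lambda q\big],
\]
whose only free $\IOb$-variable is $\obsA$ and whose remaining free variables $p,q$ are quantity tuples, so $\psi$ is a genuine scenario. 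Since $b$ witnesses $\psi(\obsA,p,q)$, the formula $\Law{\psi}\in\SPR^+$ produces a body $b'$ witnessing $\psi(\obsA',p,q)$, i.e.\ $\wl_{\obsA'}(b')=\{p+\lambda q:\lambda\in\Q\}=\wl_{\obsA}(b)$. This gives $\SPRB$, and this half is essentially routine.

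The harder half is $\SPRIOb$, because a worldview transformation is an arbitrary binary relation on $\Q^4$, not something finitely describable by quantity parameters — until \Ax{Id} is brought to bear. Fix $\obsA,\obsA'$ and $\obsB$. Using \Ax{Ev}, choose for each principal location $e_i$ a point $b_i$ with $\ev_{\obsB}(e_i)=\ev_{\obsA}(b_i)$, and set $\bar b=(b_0,\dots,b_4)$. Applying $\SPR^+$ to the scenario $(\exists\obsC)\bigwedge_{i=0}^{4}\ev_{\obsC}(e_i)=\ev_{\obsA}(y_i)$ at $\bar y=\bar b$ fixes an observer $\obsB'$ with $\ev_{\obsB'}(e_i)=\ev_{\obsA'}(b_i)$ for all $i$. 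Now, for each individual pair $(p,q)$ with $\ev_{\obsA}(p)=\ev_{\obsB}(q)$, the augmented scenario
\[
  \phi(\obsA,\bar y,p,q)\;\equiv\;(\exists\obsC)\Big[\textstyle\bigwedge_{i=0}^{4}\ev_{\obsC}(e_i)=\ev_{\obsA}(y_i)\;\wedge\;\ev_{\obsC}(q)=\ev_{\obsA}(p)\Big]
\]
is witnessed at $\bar y=\bar b$ by $\obsB$, so $\Law{\phi}$ yields some observer that agrees with $\obsB'$ at all five principal locations $e_i$ and for which $\ev_{\obsC}(q)=\ev_{\obsA'}(p)$; by \Ax{Id} this observer shares $\obsB'$'s entire view, forcing $\ev_{\obsA'}(p)=\ev_{\obsB'}(q)$. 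Running the same argument with the roles of $(\obsA,\obsB)$ and $(\obsA',\obsB')$ exchanged gives the reverse inclusion, so $\w_{\obsA\obsB}=\w_{\obsA'\obsB'}$, which is $\SPRIOb$.

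The main obstacle is precisely this last manoeuvre in the $\SPRIOb$ half. Scenario transfer only delivers, for each point, \emph{some} witness matching $\obsB'$ at the principal locations, and one must invoke \Ax{Id} (in its principal-location form) to collapse all these potential witnesses onto the single fixed $\obsB'$. The delicate design choice that makes \Ax{Id} applicable is to parametrise the transformation by where the \emph{target} observer's own principal locations $e_i$ land — hence the use of the points $b_i$ and the conjuncts $\ev_{\obsC}(e_i)=\ev_{\obsA}(y_i)$ — rather than by the images of $\obsA$'s principal locations. The $\SPRB$ half and the final combination are then immediate.
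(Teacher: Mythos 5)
Your proposal is correct and follows the paper's proof at the top level: the paper also reduces the theorem to the two decomposition results, deriving \SPRIOb from $\SPR^+$, \Ax{Id}, \Ax{Ev} and \SPRB from $\SPR^+$, \Ax{IB}, \Ax{Field}, and your ``match the five principal locations, then use \Ax{Id} to collapse all witnesses onto one observer'' engine is exactly the paper's construction. There are, however, two genuine internal differences. For \SPRB the paper transfers only the weaker scenario $\xi(\obsA,p,q)\equiv(\exists b)(p,q\in\wl_\obsA(b))$ and then needs \Ax{IB} on the \emph{target} side plus \Ax{Field} to argue that two straight lines through two distinct points coincide; your scenario packages the entire parametric description of the line, so the transferred body automatically has the right worldline and you never actually use \Ax{Field} or the target-side instance of \Ax{IB} (your argument is in fact closer to the paper's more general Theorem~\ref{thm:1.S-3.Bii} via $\wlDefM$). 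For \SPRIOb the paper uses \emph{two} scenarios, a positive one $\psi$ and its negated companion $\widetilde\psi$ with $\ev_{\obsB}(p)\neq\ev_{\obsA}(q)$, to establish the pointwise biconditional directly; you use only the positive scenario and obtain the converse by exchanging the roles of $(\obsA,\obsB)$ and $(\obsA',\obsB')$, which is legitimate because the same parameter tuple $\bar b$ witnesses the match in both directions and \Ax{Id} pins the returned witness back onto $\obsB$. This symmetrisation is a mild simplification (one scenario instead of two); the paper's two-scenario version has the advantage of making explicit exactly which finite set of scenarios suffices, which is what its sharper statement ``$\SPR(\psi,\widetilde\psi)\cup\{\Ax{Id},\Ax{Ev}\}\vDash\SPRIOb$'' records.
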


  \begin{thm} There exist scenarios $\psi, \widetilde\psi$ such that
  $
    \SPR(\psi,\widetilde\psi) \cup \{ \Ax{Id}, \Ax{Ev} \} \vDash \SPRIOb
  $. $\square$
  \label{thm:1.S-3.IOb}
  \end{thm}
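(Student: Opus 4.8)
The plan is to realise \SPRIOb by producing, for each given $h$, a companion $h'$ of $k'$ whose \emph{entire} worldview relative to $k'$ mirrors that of $h$ relative to $k$. The two scenarios will encode such a companion, and \Ax{Id} will then upgrade a \emph{finite} amount of event‑matching into the full relational identity $\w_{kh}=\w_{k'h'}$. The decisive design choice is to pin a companion down by \emph{where it sees its own principal locations}. Concretely, I would take
\[
  \psi(k,\bar g) \;\equiv\; (\exists h)\Big[\textstyle\bigwedge_{i=0}^{4} \ev_k(g_i)=\ev_h(e_i)\Big]
\]
and
\[
  \widetilde\psi(k,\bar g,p,q) \;\equiv\; (\exists h)\Big[\big(\textstyle\bigwedge_{i=0}^{4}\ev_k(g_i)=\ev_h(e_i)\big)\land \ev_k(p)=\ev_h(q)\Big],
\]
where $\bar g=(g_0,\dots,g_4)$ ranges over $(\Q^4)^5$ and each equation $\ev_k(g_i)=\ev_h(e_i)$ abbreviates the first‑order formula $(\forall b)[\W(k,b,g_i)\iff\W(h,b,e_i)]$. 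Both formulas have $k$ as their only free variable of sort \IOb and no free variable of sort \B, so both are admissible scenarios and $\Law\psi,\Law{\widetilde\psi}$ are available from $\SPR(\psi,\widetilde\psi)$.

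Given arbitrary $k,k',h$, I would first apply \Ax{Ev} to the pair $(h,k)$ at each principal location $e_i$ to obtain $g_i\in\Q^4$ with $\ev_h(e_i)=\ev_k(g_i)$; this makes $\psi(k,\bar g)$ true, witnessed by $h$. Applying $\Law\psi$ yields $\psi(k',\bar g)$, hence an observer $h'$ with $\ev_{h'}(e_i)=\ev_{k'}(g_i)$ for every $i$, and this $h'$ is the required companion. The key structural fact is that $h'$ is essentially unique: if $\hat h$ also satisfies $\ev_{\hat h}(e_i)=\ev_{k'}(g_i)$ for all $i$, then $h'$ and $\hat h$ agree at all five principal locations, so \Ax{Id} forces $\ev_{h'}=\ev_{\hat h}$ throughout. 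The same remark pins down $h$ among the companions of $k$, since $\ev_h(e_i)=\ev_k(g_i)$.

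It then remains to prove $\w_{kh}=\w_{k'h'}$ as relations, which I would do by transferring membership pointwise with $\widetilde\psi$. For $\w_{kh}\subseteq\w_{k'h'}$: given $(p,q)$ with $\ev_k(p)=\ev_h(q)$, the witness $h$ makes $\widetilde\psi(k,\bar g,p,q)$ true, so $\Law{\widetilde\psi}$ delivers some $h''$ with $\ev_{h''}(e_i)=\ev_{k'}(g_i)$ for all $i$ and $\ev_{k'}(p)=\ev_{h''}(q)$; by the uniqueness above $\ev_{h''}=\ev_{h'}$, whence $\ev_{k'}(p)=\ev_{h'}(q)$, i.e.\ $(p,q)\in\w_{k'h'}$. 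The reverse inclusion is symmetric, exchanging the roles of $(k,h)$ and $(k',h')$ and invoking the uniqueness of $h$. Together these give $\w_{kh}=\w_{k'h'}$; since $k,k',h$ were arbitrary, \SPRIOb follows.

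The main obstacle, and the reason a naive encoding fails, is the mismatch between the locations at which \Ax{Id} compares observers (always the principal locations $e_i$) and the locations that would naturally index a worldview transformation. Parametrising a companion by \emph{where it sees $k$'s principal events} would force \Ax{Id} to be applied at the \emph{image} points rather than at the $e_i$, where its hypothesis is vacuous; pinning the companion instead by where it sees its \emph{own} principal events (the parameters $\bar g$ above) is exactly what aligns the hypothesis of \Ax{Id} with the points $e_i$, converting five event‑coincidences into a full worldview identity. The only other point needing care is verifying that the event‑equalities really are first‑order and that $\bar g,p,q$ enter solely as quantity parameters, so that $\psi$ and $\widetilde\psi$ are genuine scenarios.
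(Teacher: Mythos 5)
Your proof is correct and rests on the same construction as the paper's: the companion $h'$ is pinned down by the five parameters $g_i$ at which $k$ sees the events $\ev_{h}(e_i)$ (supplied by \Ax{Ev}), and \Ax{Id} upgrades agreement at the five principal locations into full agreement of worldviews, making the companion essentially unique; this matches the paper's $\match$ relation exactly, including your observation that the companion must be parametrised by where it sees its \emph{own} principal events. The one genuine divergence is in how the equivalence $\w_{kh}(p,q)\iff\w_{k'h'}(p,q)$ is closed: the paper's second scenario is the \emph{negated} twin (asserting $\ev_{\obsB}(p)\neq\ev_{\obsA}(q)$) and the proof proceeds by a case split on whether the pair lies in $\w_{kh}$, whereas you keep both scenarios positive and instead prove the two inclusions separately, applying $\Law{\widetilde\psi}$ in the direction from $k'$ back to $k$ and invoking the \Ax{Id}-uniqueness of $h$ itself on the $k$-side. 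Both devices are sound and cost the same number of scenarios; yours exploits the symmetry of $\Law{}$ to avoid the negated formula, at the price of needing the uniqueness argument on both sides rather than only for $h'$.
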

  
  \begin{thm}
  There exists a scenario $\xi$ such that
  $
    \SPR(\xi) \cup \{ \Ax{IB}, \Ax{Field} \} \vDash \SPRB
  $. $\square$
  \label{thm:1.S-3.B}
  \end{thm}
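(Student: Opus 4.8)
The plan is to exhibit a single, observationally natural scenario $\xi$ and show that $\Law{\xi}$, together with $\Ax{IB}$ and $\Ax{Field}$, forces any two inertial observers to realise exactly the same worldlines. For $\xi$ I would take the statement that a given observer can see one body present at two prescribed distinct events:
\[
\xi(\obsA,p,q) \;\equiv\; p \neq q \land (\exists b)\big[\W(\obsA,b,p) \land \W(\obsA,b,q)\big].
\]
This is a genuine scenario: its only free variable of sort $\IOb$ is $\obsA$, the body $b$ is existentially bound (so there is no free variable of sort $\B$), and the remaining free variables $p,q$ are quantity tuples. Hence $\SPR(\xi) = \{\Law{\xi}\}$ is well formed, and $\Law{\xi}$ asserts that all inertial observers agree, for every choice of distinct $p,q$, on whether some body passes through both.

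The technical heart of the argument is a reconstruction lemma: \emph{assuming $\Ax{IB}$ and $\Ax{Field}$, the worldline $\wl_{\obsA}(b)$ equals the unique line through any two of its distinct points.} To prove it I would start from $\Ax{IB}$, which gives $\wl_{\obsA}(b) = \{a + \lambda c : \lambda \in \Q\}$ for some $a,c$ with $c \neq e_0$; writing two distinct worldline points as $p = a + \lambda_p c$ and $q = a + \lambda_q c$, distinctness forces $\lambda_p \neq \lambda_q$, so $q - p = (\lambda_q - \lambda_p)c$. Here $\Ax{Field}$ is essential: it supplies the inverse $(\lambda_q - \lambda_p)^{-1}$, letting me solve $c = (\lambda_q - \lambda_p)^{-1}(q-p)$ and reparametrise, whence $\{a + \lambda c : \lambda \in \Q\} = \{p + \mu(q-p) : \mu \in \Q\}$. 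Over a mere ring this step can fail, which is why the field hypothesis appears in the statement.

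With the lemma in hand the theorem is immediate. Given observers $\obsA,\obsA'$ and a body $b$, axiom $\Ax{IB}$ writes $\wl_{\obsA}(b) = \{a + \lambda c : \lambda \in \Q\}$ with $c \neq e_0$; then $p := a$ and $q := a + c$ are distinct points of this worldline, so $\xi(\obsA,p,q)$ holds with witness $b$. Applying $\Law{\xi}$ gives $\xi(\obsA',p,q)$, \ie a body $b'$ with $p,q \in \wl_{\obsA'}(b')$. The reconstruction lemma applied on each side now yields $\wl_{\obsA}(b) = \{p + \mu(q-p) : \mu \in \Q\} = \wl_{\obsA'}(b')$, the witness demanded by $\SPRB$.

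The main obstacle is the reconstruction lemma, and in particular pinning down precisely where $\Ax{Field}$ is unavoidable; the rest (checking that $\xi$ is a scenario and that the transfer through $\Law{\xi}$ goes through) is routine. I note that one could instead encode the entire line directly, taking $\xi'(\obsA,p,q) \equiv (\exists b)[\wl_{\obsA}(b) = \{p + \lambda q : \lambda \in \Q\}]$, which makes the transfer trivial and appeals only to $\Ax{IB}$; I prefer the two-point $\xi$ because it rests on a finite, genuinely observational claim and thereby explains the role of $\Ax{Field}$.
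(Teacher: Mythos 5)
Your proposal is correct and follows essentially the same route as the paper: the paper's scenario is $\xi(\obsA,p,q)\equiv(\exists b)(p,q\in\wl_{\obsA}(b))$, and its argument likewise picks two distinct points on the $\Ax{IB}$-straight worldline, transfers their co-occurrence via $\SPR(\xi)$, and concludes from the uniqueness (over a field) of the line through two distinct points. Your explicit ``reconstruction lemma'' merely spells out the step the paper compresses into ``there can be at most one such line (by \Ax{Field})'', and the extra conjunct $p\neq q$ in your $\xi$ is harmless.
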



Theorem \ref{thm:3-2} tells us that equipping \SPRBIOb with additional axioms allows us to recapture the power of \SPRM (and hence, by Theorem \ref{thm:2-1.+}, $\SPR^+$).

\begin{thm} Assume $\mathcal{L}=\mathcal{L}_0$. Then
$\Model \vDash \SPRBIOb \cup \{  \Ax{Ev}, \Ax{Ext} \}$ $\Implies$ $\SPRM$. $\square$
\label{thm:3-2}
\end{thm}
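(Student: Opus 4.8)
The plan is to fix arbitrary inertial observers $\obsA,\obsB$ and build, by hand, an automorphism $\alpha\in Aut(\Model)$ that fixes $\Q$ pointwise and sends $\obsA$ to $\obsB$; since $\mathcal{L}=\mathcal{L}_0$, the sort $\Q$ is the only sort representing mathematical objects, so no further restriction clauses are needed and such an $\alpha$ witnesses $\SPRM$ for the pair $(\obsA,\obsB)$. Because $\alpha$ is to be the identity on $\Q$ (and hence on $\Q^4$), it automatically respects $0,1,+,\cdot$, so the only real content is a pair of bijections $\sigma\colon\IOb\to\IOb$ and $\tau\colon\B\to\B$ making $\alpha$ preserve $\W$. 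Writing out $\W$-preservation and using that $\alpha$ fixes locations, this reduces to the single compatibility condition
\[
 \tau\big[\ev_{\obsC}(p)\big]=\ev_{\sigma(\obsC)}(p)\qquad\text{for all }\obsC\in\IOb,\ p\in\Q^4,
 \tag{$\ast$}
\]
so the whole proof amounts to producing $\sigma,\tau$ satisfying $(\ast)$ with $\sigma(\obsA)=\obsB$.

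First I would build $\tau$ from \SPRB. For each body $b$, \SPRB applied to the pair $\obsA,\obsB$ yields some $b'$ with $\wl_{\obsB}(b')=\wl_{\obsA}(b)$, and \Ax{ExtB} makes $b'$ unique; set $\tau(b)=b'$. Running \SPRB with the roles of $\obsA,\obsB$ reversed shows $\tau$ is onto, and a one-line \Ax{ExtB} argument shows it is injective, so $\tau$ is a bijection. From the defining equation $\wl_{\obsB}(\tau b)=\wl_{\obsA}(b)$ one reads off, for every location $p$, the key identity
\[
 \ev_{\obsB}(p)=\tau\big[\ev_{\obsA}(p)\big],
\]
since $p\in\wl_{\obsA}(b)$ iff $p\in\wl_{\obsB}(\tau b)$ iff $\tau b\in\ev_{\obsB}(p)$, and $\tau$ is onto.

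Next I would build $\sigma$ from \SPRIOb. For each observer $\obsC$, \SPRIOb applied to $\obsA,\obsB$ yields $\obsC'$ with $\w_{\obsA\obsC}=\w_{\obsB\obsC'}$, and I claim $\obsC'$ is unique, defining $\sigma(\obsC)=\obsC'$. Uniqueness is where \Ax{Ev} first enters: if $\w_{\obsB\obsC'}=\w_{\obsB\obsC''}$, then for each $q$ one uses \Ax{Ev} to find $p$ with $\ev_{\obsC'}(q)=\ev_{\obsB}(p)$, whence $\w_{\obsB\obsC'}(p,q)$ and therefore $\w_{\obsB\obsC''}(p,q)$, giving $\ev_{\obsC'}(q)=\ev_{\obsC''}(q)$; as $q$ is arbitrary, \Ax{ExtIOb} forces $\obsC'=\obsC''$. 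The same style of argument gives bijectivity of $\sigma$.

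The heart of the proof is then verifying $(\ast)$ and, as a special case, $\sigma(\obsA)=\obsB$. Fixing $\obsC$ and $q$, I would use \Ax{Ev} to choose $p$ with $\ev_{\obsA}(p)=\ev_{\obsC}(q)$, so that $\w_{\obsA\obsC}(p,q)$ holds; the defining property $\w_{\obsA\obsC}=\w_{\obsB\,\sigma(\obsC)}$ then gives $\ev_{\obsB}(p)=\ev_{\sigma(\obsC)}(q)$, and combining with the key identity $\ev_{\obsB}(p)=\tau[\ev_{\obsA}(p)]=\tau[\ev_{\obsC}(q)]$ yields exactly $(\ast)$. Specialising to $\obsC=\obsA$ gives $\ev_{\sigma(\obsA)}=\ev_{\obsB}$ everywhere, so \Ax{ExtIOb} delivers $\sigma(\obsA)=\obsB$; and since $(\ast)$ is precisely $\W$-preservation for the map that is $\sigma$ on \IOb, $\tau$ on \B, and the identity on \Q, this map is the desired automorphism. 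I expect the main obstacle to be exactly this compatibility step: a priori $\tau$ and $\sigma$ are built from unrelated data — worldlines via \SPRB and worldview transformations via \SPRIOb — and nothing forces them to cohere, nor is it obvious that $\sigma(\obsA)=\obsB$ can be arranged (equivalently, that $\w_{\obsA\obsA}=\w_{\obsB\obsB}$, which one might fear could fail). The role of \Ax{Ev} is precisely to transport events between the two observers so that these two bodies of data agree; getting this transport right, together with the uniqueness supplied by \Ax{Ext}, is the crux.
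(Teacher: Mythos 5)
Your proposal is correct and follows essentially the same route as the paper's proof: build $\alpha$ sort by sort (via \SPRIOb with uniqueness from \Ax{Ev}+\Ax{ExtIOb}, via \SPRB with uniqueness from \Ax{ExtB}, identity on $\Q$), then use \Ax{Ev} to transport events through $\obsA$ and $\obsB$ to verify $\W$-preservation and bijectivity. The only local difference is that you obtain $\sigma(\obsA)=\obsB$ as the $\obsC=\obsA$ instance of your compatibility identity $(\ast)$, whereas the paper derives it separately from composition identities for worldview relations ($\w_{\obsA'\alpha(\obsA)}=\w_{\obsA'\alpha(\obsA)}\circ\w_{\alpha(\obsA)\obsA'}=\w_{\obsA'\obsA'}$); both are valid and the overall argument is the same.
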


Thus, although \SPRM, \SPRBIOb and $\SPR^+$ are logically distinct, they become equivalent in the presence of suitable auxiliary axioms.

\section{Alternatives to \Ax{IB}}
\label{sec:alternatives-to-AxIB}

In this section, we generalise \Ax{IB} to allow discussion of 
accelerated bodies.

For every model $\Model$ we formulate 
a property which says that world-lines are parametrically
definable subsets of $\Q^4$, where the parameters can be chosen only from
\Q. For the definitions, cf.\ \cite[\S1.1.6, \S1.2.1]{Mar02}. 
\begin{description}
\item[\wlDefM] For any $\obsA\in\IOb$ and $b\in\B$ there is a formula
$\varphi(y_1,y_2,y_3,y_4,x_1,\ldots, x_n)$, where all the free variables
$y_1,y_2,y_3,y_4,x_1,\ldots,x_n$ of $\varphi$ 
are of sort $\Q$, and there is
$\bar a\in\Q^n$ such that
$\wl_\obsA(b)\equiv \{q\in\Q^4 :  
\Model\vDash\varphi(q,\bar a)\}$.
\end{description}

\medskip

We note that plenty of curves in $\Q^4$ are definable in the sense above, e.g.\ curves 
which can be defined by polynomial functions,  as well as the worldlines of uniformly accelerated bodies
 in both special relativity and Newtonian kinematics.

In general, not every accelerated worldline is definable -- indeed, the set of 
curves which are definable depends both on the language and the model. For example,
uniform circular motion is undefinable in many models; however, if
we extend the language with the sine function
as a primitive notion and assert its basic
properties by including the appropriate axioms, then uniform circular 
motion becomes definable.

By Theorems \ref{thm:1.S-3.Bii} and \ref{thm:1.+-3.BIObii}, 
assumptions \Ax{IB} and \Ax{Field} can be
replaced by \wlDefM in Theorem \ref{thm:1.+-3.BIOb} (Theorem \ref{thm:1.+-3.BIObii} follows immediately from Theorems \ref{thm:1.S-3.IOb} and \ref{thm:1.S-3.Bii})

\begin{thm}
$(\wlDefM\text{\rm\ and }\Model\vDash\SPR^+)$  $\Implies$ $\Model\vDash\SPRB$.
$\square$
 \label{thm:1.S-3.Bii}
\end{thm}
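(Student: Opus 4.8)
The plan is to reduce $\SPRB$ to a single instance of the relativity principle $\SPR^+$, using $\wlDefM$ to convert the set-theoretic claim ``$\wl_{\obsA}(b)$ is realisable as a worldline for $\obsA'$'' into a first-order scenario to which $\Law{}$ applies. Fix inertial observers $\obsA,\obsA'$ and a body $b$; the goal is to produce a body $b'$ with $\wl_{\obsA'}(b')=\wl_{\obsA}(b)$.

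First I would invoke $\wlDefM$ for the pair $\obsA,b$ to obtain a formula $\varphi(y_1,y_2,y_3,y_4,x_1,\dots,x_n)$, all of whose free variables are of sort $\Q$, together with parameters $\bar a\in\Q^n$, such that $\wl_{\obsA}(b)=\{q\in\Q^4 : \Model\vDash\varphi(q,\bar a)\}$. Writing $p$ for the quadruple $(y_1,y_2,y_3,y_4)$, I would then form
\[
\psi(\obsA,\bar x) \;\equiv\; (\exists c)(\forall p)\big[\,\W(\obsA,c,p)\iff\varphi(p,\bar x)\,\big],
\]
which asserts that $\obsA$ coordinatises some body whose worldline is exactly the set defined by $\varphi(\cdot,\bar x)$. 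The key verification here is that $\psi$ is a genuine scenario: the only free variable of sort $\IOb$ is $\obsA$ (the body $c$ is existentially bound, and $\varphi$ contributes no free variable of sort $\IOb$ since all of its free variables are of sort $\Q$), there is no free variable of sort $\B$, and the remaining free variables $\bar x$ are of sort $\Q$. Hence $\psi\in\Scenarios$, and therefore $\Law{\psi}\in\SPR^+$.

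Next I would observe that the fixed body $b$ witnesses $\Model\vDash\psi(\obsA,\bar a)$ (take $c:=b$), since by the choice of $\varphi$ and $\bar a$ we have $\W(\obsA,b,p)$ iff $p\in\wl_{\obsA}(b)$ iff $\Model\vDash\varphi(p,\bar a)$. Because $\Model\vDash\SPR^+$ and $\Law{\psi}\in\SPR^+$, all inertial observers agree on the truth value of $\psi(\cdot,\bar a)$; in particular $\Model\vDash\psi(\obsA',\bar a)$. Unpacking the existential quantifier yields a body $b'$ with $\W(\obsA',b',p)\iff\varphi(p,\bar a)$ for every $p$, so that $\wl_{\obsA'}(b')=\{q : \Model\vDash\varphi(q,\bar a)\}=\wl_{\obsA}(b)$, which is exactly the witness demanded by $\SPRB$.

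I expect the only genuine obstacle to be the bookkeeping that confirms $\psi$ meets the syntactic constraints defining a scenario --- in particular that substituting the definability formula $\varphi$ introduces no hidden free variable of sort $\IOb$ or $\B$ (any such variables inside $\varphi$ being bound, and renamable to avoid clashing with $\obsA$ or $c$). Everything else is a direct transfer argument: $\wlDefM$ makes the target worldline first-order expressible using quantity parameters only, and $\SPR^+$ then propagates its realisability from $\obsA$ to $\obsA'$.
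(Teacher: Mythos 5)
Your proposal is correct and follows essentially the same route as the paper's own proof: use $\wlDefM$ to obtain the defining formula $\varphi$ and parameters $\bar a$ for $\wl_{\obsA}(b)$, package this into the scenario $\psi(\obsB,\bar x)\equiv(\exists c)(\forall q)[q\in\wl_{\obsB}(c)\leftrightarrow\varphi(q,\bar x)]$, and transfer its truth from $\obsA$ to $\obsA'$ via $\SPR^+$. Your extra bookkeeping confirming that $\psi$ meets the syntactic requirements of a scenario is a point the paper leaves implicit, but otherwise the two arguments coincide.
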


\begin{thm}
$(\wlDefM\text{\rm\ and }\Model\vDash\SPR^+ \cup \{ \Ax{Id}, \Ax{Ev}\})$ 
$\Longrightarrow$ $\Model\vDash\SPRBIOb$.
$\square$
\label{thm:1.+-3.BIObii}
\end{thm}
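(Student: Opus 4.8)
The plan is to exploit the fact that $\SPRBIOb$ is, by definition, the pair $\{\SPRB, \SPRIOb\}$, so it suffices to derive each of its two components separately from the hypotheses and then combine them. Both components are already handled by theorems proved earlier in the paper, so the entire argument reduces to checking that the hypotheses of those theorems are met in our situation and then applying them.

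First I would establish $\Model \vDash \SPRB$. For this I simply invoke Theorem~\ref{thm:1.S-3.Bii}, whose hypotheses are precisely \wlDefM together with $\Model \vDash \SPR^+$. Both are among the assumptions of the present theorem, so the conclusion $\Model \vDash \SPRB$ follows at once. Note that neither \Ax{Id} nor \Ax{Ev} is needed for this half.

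Next I would establish $\Model \vDash \SPRIOb$ using Theorem~\ref{thm:1.S-3.IOb}. That theorem supplies two specific scenarios $\psi$ and $\widetilde\psi$ for which $\SPR(\psi,\widetilde\psi) \cup \{\Ax{Id}, \Ax{Ev}\} \vDash \SPRIOb$. The one observation worth spelling out is that $\SPR^+$ subsumes $\SPR(\psi,\widetilde\psi)$: since $\psi$ and $\widetilde\psi$ are scenarios, the formulae $\Law\psi$ and $\Law{\widetilde\psi}$ belong to $\SPR^+ = \{\Law\phi : \phi \in \Scenarios\}$. Hence $\Model \vDash \SPR^+$ forces $\Model \vDash \SPR(\psi,\widetilde\psi)$, and combined with $\Model \vDash \Ax{Id}$ and $\Model \vDash \Ax{Ev}$ we obtain $\Model \vDash \SPRIOb$ by Theorem~\ref{thm:1.S-3.IOb}.

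Combining the two halves, $\Model$ satisfies both $\SPRB$ and $\SPRIOb$, so $\Model \vDash \SPRBIOb$, as required. There is no real obstacle here: the substantive work lies entirely in the two component theorems, and the present result is essentially a bookkeeping combination of them. The only point requiring any care is the containment $\{\Law\psi, \Law{\widetilde\psi}\} \subseteq \SPR^+$ used to pass from the strong assumption $\SPR^+$ down to the finitely many scenario laws actually consumed by Theorem~\ref{thm:1.S-3.IOb}; once that is noted, everything else is a matter of matching hypotheses.
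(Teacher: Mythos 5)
Your proof is correct and follows exactly the paper's route: the paper likewise derives the result as an immediate consequence of Theorems~\ref{thm:1.S-3.IOb} and~\ref{thm:1.S-3.Bii}, with your write-up merely spelling out the hypothesis-matching (including the containment $\{\Law\psi,\Law{\widetilde\psi}\}\subseteq\SPR^+$) that the paper leaves implicit.
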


We note that $\Model\vDash\{\Ax{IB},\Ax{Field}\}$ $\Implies$ $\wlDefM$. 
Moreover \wlDefM is more general than \Ax{IB} assuming \Ax{Field}. The disadvantage of \wlDefM is
that it is not an axiom, but a property of model $\Model$. We now introduce,
for every natural number $n$, an axiom \Axwl{n} which is more general than 
\Ax{IB} assuming \Ax{Field} and $n\geq 3$, and stronger than \wlDefM.

\begin{description}
\item[\Axwl{n}] Worldlines are determined by $n$ distinct locations, 
\ie if two worldlines agree at $n$ distinct locations, then they
coincide:
\begin{multline*}
(\forall\obsA,\obsA')(\forall b,b')\\
{[}(\exists\text{ distinct } p^1,\ldots,p^n \in \wl_\obsA(b)\cap\wl_{\obsA'}(b'))
\ \rightarrow\ 
\wl_\obsA(b)=\wl_{\obsA'}(b')].
\end{multline*}
\end{description}

We note that $\{\Ax{IB},\Ax{Field}\}\vDash\Axwl{2}$, 
and $\Axwl{n}\vDash \Axwl{i}$ if $i\geq n$.

Furthermore, for every $n$, we have
$\Model\vDash\Axwl{n}\ \Longrightarrow\ \wlDefM$. To see that this must be 
true, assume $\Model\vDash\Axwl{n}$, choose $\obsA\in\IOb$ 
and $b\in\B$, let $p^1,\ldots, p^n\in\wl_\obsA(b)$ be distinct locations, 
and define
\begin{multline*}
\varphi(y_1,y_2,y_3,y_4,x_1^1,x_2^1,x_3^1,x_4^1,\ldots,x_1^n,x_2^n,x_3^n,x_4^n)
\equiv\\
 (\exists \obsB)(\exists c)
[(y_1,y_2,y_3,y_4),(x_1^1,x_2^1,x_3^1,x_4^1),\ldots,
(x_1^n,x_2^n,x_3^n,x_4^n)\in\wl_\obsB(c)].
\end{multline*}
Then it is easy to see that $\wl_\obsA(b)\equiv\{q\in\Q^4 : 
\Model\vDash\varphi(q,p^1,\ldots,p^n)\}$, whence $\wlDefM$ holds, as claimed.

By Theorems \ref{Axwlthm-1} and \ref{Axwlthm-2},
Theorems \ref{thm:1.+-3.BIOb} and \ref{thm:1.S-3.B} remain true if we replace
\Ax{IB} and \Ax{Field} with \Axwl{n}.

\begin{thm}
\label{Axwlthm-1}
$\SPR^+ \cup \{ \Ax{Id}, \Ax{Ev}, \Axwl{n} \} \vDash \SPRBIOb $. 
$\square$
\end{thm}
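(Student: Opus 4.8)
The plan is to derive this as a near-immediate corollary of the already-proven generalisation \ref{thm:1.+-3.BIObii}, using the implication $\Model\vDash\Axwl{n}\ \Longrightarrow\ \wlDefM$ established in the paragraph preceding the statement. In other words, the strategy is to convert the syntactic hypothesis \Axwl{n} into the semantic property \wlDefM, and then feed that into \ref{thm:1.+-3.BIObii}.

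Concretely, suppose $\Model\vDash\SPR^+\cup\{\Ax{Id},\Ax{Ev},\Axwl{n}\}$. First I would note that $\Model\vDash\Axwl{n}$ yields \wlDefM: for any $\obsA$ and $b$ one picks $n$ distinct points $p^1,\dots,p^n\in\wl_\obsA(b)$ and takes the formula $\varphi$ asserting that some observer coordinatises some body through $q,p^1,\dots,p^n$; then \Axwl{n} forces $\{q : \Model\vDash\varphi(q,p^1,\dots,p^n)\}$ to coincide with $\wl_\obsA(b)$. This step is exactly the implication proved just above, so I would cite it rather than reprove it. Since the hypothesis already gives $\Model\vDash\SPR^+\cup\{\Ax{Id},\Ax{Ev}\}$, all the hypotheses of \ref{thm:1.+-3.BIObii} are now in place, and that theorem delivers $\Model\vDash\SPRBIOb$, as required.

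It is worth recording the conceptual decomposition underlying this, which mirrors the structure of \ref{thm:1.+-3.BIOb}. The \SPRIOb half of \SPRBIOb is untouched by the change of axioms and comes straight from \ref{thm:1.S-3.IOb}, since the scenarios $\psi,\widetilde\psi$ it uses give formulae $\Law\psi,\Law{\widetilde\psi}$ already contained in $\SPR^+$. The \SPRB half is the only place where \Ax{IB} and \Ax{Field} have been traded away: having secured \wlDefM, one invokes \ref{thm:1.S-3.Bii} to obtain $\Model\vDash\SPRB$ from \wlDefM together with $\SPR^+$. I do not expect a genuine obstacle, as the theorem is essentially bookkeeping on top of \ref{thm:1.S-3.IOb}, \ref{thm:1.S-3.Bii} and the implication $\Axwl{n}\Longrightarrow\wlDefM$; the only point I would double-check is that the $n$ distinct points required to build $\varphi$ really exist, \ie that worldlines too small to contain $n$ distinct locations are handled separately (such finite worldlines are trivially definable using their points as parameters, so \wlDefM still holds).
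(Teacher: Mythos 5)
Your proof is correct, but it routes the $\SPRB$ half of the claim differently from the paper. The paper's own proof is the one-liner ``immediate consequence of Theorems \ref{thm:1.S-3.IOb} and \ref{Axwlthm-2}'': it handles $\SPRB$ via the single fixed scenario $\xi(\obsA,p^1,\ldots,p^n) \equiv (\exists b)(p^1,\ldots,p^n \in \wl_{\obsA}(b))$, so that only $\SPR(\xi)$ together with \Axwl{n} is consumed, and \wlDefM never enters. You instead convert \Axwl{n} into the semantic property \wlDefM and then invoke Theorem \ref{thm:1.+-3.BIObii} (hence Theorem \ref{thm:1.S-3.Bii}); this is sound, since the implication $\Model\vDash\Axwl{n}\Longrightarrow\wlDefM$ is established just before the theorem statement, but it genuinely uses the full strength of $\SPR^+$ for the $\SPRB$ half, because the scenario $\psi(\obsB,\bar x)$ built in Theorem \ref{thm:1.S-3.Bii} depends on the defining formula $\varphi$ of the worldline and so cannot be fixed in advance. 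The paper's route is therefore sharper in the sense that it isolates a single scenario sufficient for $\SPRB$ (which matters for the minimisation question raised in the Discussion), while yours buys uniformity with the \wlDefM-based generalisations. Your closing caveat about worldlines containing fewer than $n$ distinct locations is well taken --- the paper's sketch of $\Axwl{n}\Longrightarrow\wlDefM$ silently assumes $n$ distinct points exist on $\wl_{\obsA}(b)$ --- and your resolution (finite worldlines are parametrically definable from their own points, the empty worldline by an unsatisfiable formula) correctly closes that gap; note that the paper's direct route via Theorem \ref{Axwlthm-2} faces the analogous degenerate case (a worldline with fewer than $n$ points makes the hypothesis of \Axwl{n} vacuous), so neither approach escapes it entirely.
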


\begin{thm}
\label{Axwlthm-2}
There is  a scenario $\xi$ such that
  $\SPR(\xi) \cup \{ \Axwl{n} \} \vDash \SPRB$. $\square$
\end{thm}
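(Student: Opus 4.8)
The plan is to mirror the proof of Theorem~\ref{thm:1.S-3.B}, replacing the line-description scenario by one built from $n$ location parameters. The essential new difficulty is that \Ax{IB} is no longer available to force every worldline to be a full straight line (hence to carry arbitrarily many distinct points); under \Axwl{n} a worldline may be an arbitrary subset of $\Q^4$, and \Axwl{n} only guarantees that two worldlines sharing $n$ \emph{distinct} locations must coincide. I would therefore take $\xi$ to be a single scenario with free variables $\obsA$ (of sort \IOb) and $p^1,\dots,p^n,t$ (of sort $\Q$) that packages three clauses selected by guards mentioning the parameters only:
\[
\xi(\obsA,\bar p,t)\equiv\big(t\neq1\wedge D(\bar p)\wedge\gamma\big)\vee\big(t\neq1\wedge\neg D(\bar p)\wedge\delta\big)\vee\big(t=1\wedge\varepsilon\big),
\]
where $D(\bar p)\equiv\bigwedge_{i<j}p^i\neq p^j$ asserts that the $p^i$ are distinct, $\gamma\equiv(\exists b)\bigwedge_{i=1}^{n}\W(\obsA,b,p^i)$ says ``$\obsA$ can coordinatise one body at all $n$ points'', $\delta\equiv(\exists b)(\forall r)[\W(\obsA,b,r)\iff\bigvee_{i}r=p^i]$ says ``$\obsA$ can coordinatise a body whose worldline is exactly $\{p^1,\dots,p^n\}$'', and $\varepsilon\equiv(\exists b)(\forall r)\lnot\W(\obsA,b,r)$ says ``$\obsA$ can coordinatise a body with empty worldline''. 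One checks immediately that $\xi$ is a scenario: $\obsA$ is its only free observer variable and its remaining free variables are numerical.

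With $\xi$ in hand I would prove \SPRB by fixing observers $\obsA,\obsA'$ and a body $b$ and splitting on the size of $\wl_{\obsA}(b)$. If $\wl_{\obsA}(b)$ contains $n$ distinct locations, choose such $p^1,\dots,p^n$ and take any $t\neq1$; then $D(\bar p)$ holds and $\gamma$ is witnessed by $b$, so $\xi(\obsA,\bar p,t)$ holds, and since $\Law{\xi}\in\SPR(\xi)$ so does $\xi(\obsA',\bar p,t)$. Because the guards depend only on the parameters, they take the same truth values for $\obsA'$, so the \emph{first} clause is again the active one: there is $b'$ with $p^1,\dots,p^n\in\wl_{\obsA'}(b')$. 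These are $n$ distinct locations common to $\wl_{\obsA}(b)$ and $\wl_{\obsA'}(b')$, whence \Axwl{n} gives $\wl_{\obsA}(b)=\wl_{\obsA'}(b')$. This is the only place \Axwl{n} is used.

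The remaining, genuinely delicate, case is when $\wl_{\obsA}(b)$ has fewer than $n$ distinct points; here transferring mere membership would only yield $\wl_{\obsA}(b)\subseteq\wl_{\obsA'}(b')$, and \Axwl{n} cannot upgrade this to equality for want of $n$ distinct shared points. If $\wl_{\obsA}(b)=\{r^1,\dots,r^m\}$ with $1\le m<n$, I would instead list these points and pad the tuple by repetition (say $p^i=r^1$ for $i>m$) and again take $t\neq1$; then $\neg D(\bar p)$ holds, $\delta$ is witnessed by $b$, and the \emph{second} clause transfers to $\obsA'$, producing $b'$ with $\wl_{\obsA'}(b')=\{p^1,\dots,p^n\}=\wl_{\obsA}(b)$ exactly. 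The crucial point is that $D(\bar p)$ and $\neg D(\bar p)$ are observer-independent, so $\Law{\xi}$ selects the \emph{same} clause on both sides and the existential witness $b'$ cannot slip into the ``wrong'' clause. Finally, for an empty worldline ($m=0$), which cannot be written as $\{p^1,\dots,p^n\}$, I would set $t=1$ and invoke the parameter-free clause $\varepsilon$, whose transfer is immediate.

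The main obstacle is exactly this uniform packaging: a single $\Law{\xi}$ must simultaneously transfer enough points to trigger \Axwl{n} on large worldlines and transfer the \emph{exact} set on small ones, while preventing $b'$ from satisfying an inappropriate clause. The device that resolves it is the use of guards that mention only the numerical parameters, so that the active clause is determined identically for $\obsA$ and $\obsA'$. The only mild hypothesis this needs is that a quantity $t\neq1$ exists (equivalently $0\neq1$), which is automatic under any of the standard nondegeneracy assumptions; the pathological one-element model of $\Q$ can be dispatched by hand.
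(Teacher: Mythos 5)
Your construction is correct for nondegenerate $\Q$, and it takes a genuinely different, more careful route than the paper. The paper's own proof consists of setting $\xi(\obsA,p^1,\ldots,p^n)\equiv(\exists b)(p^1,\ldots,p^n\in\wl_{\obsA}(b))$ and declaring the verification easy; that verification is exactly your Case~1, and it tacitly assumes every worldline contains $n$ distinct locations. Your worry is well founded: \Axwl{n} does not guarantee this, and for a worldline with fewer than $n$ distinct points the bare scenario only transfers an inclusion $\wl_{\obsA}(b)\subseteq\wl_{\obsA'}(b')$. Indeed, a model with two observers and two bodies in which $\obsA$ assigns worldlines $\{r\}$ and $\{r,s\}$ while $\obsA'$ assigns $\{r,s\}$ to both satisfies $\Law{\xi}$ for the paper's $\xi$ and satisfies \Axwl{n} for every $n\geq 2$, yet violates \SPRB. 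So your guarded three-clause scenario, with guards $D(\bar p)$ and $t=1$ that mention only numerical parameters and hence select the same clause for $\obsA$ and $\obsA'$, is not pedantry: it is what makes the ``details omitted'' actually go through, transferring $n$ distinct points when \Axwl{n} can be invoked and the exact finite worldline when it cannot.

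The one loose end is your closing claim that the one-element model of $\Q$ ``can be dispatched by hand''. For your particular $\xi$ it cannot: if $\Q^4=\{o\}$ then the guard $t\neq 1$ is never satisfiable, so $\Law{\xi}$ constrains only the existence of empty worldlines, and the model with bodies $b,c$ where $\wl_{\obsA}(b)=\{o\}$, $\wl_{\obsA}(c)=\emptyset$ and $\wl_{\obsA'}(b)=\wl_{\obsA'}(c)=\emptyset$ satisfies $\Law{\xi}$ and (vacuously, for $n\geq 2$) \Axwl{n} while violating \SPRB. Since the theorem as stated carries no nondegeneracy hypothesis, you should either add the assumption that $\Q$ has at least two elements (which follows from \Ax{Field}, the axiom being replaced) or redesign the empty-worldline clause so that it does not depend on two distinct quantities existing. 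This is a degenerate corner and does not affect the substance of your argument, which is otherwise complete and strictly more informative than the proof in the paper.
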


\section{Proofs}
\label{sec:proofs}

We begin by proving a simple lemma which allows us to identify when two observers are in fact the same observer. This lemma will prove useful in several places below.

\bigskip
\begin{lem}
$\{ \Ax{Ev}, \Ax{ExtIOb} \} \vDash (\forall \obsA, \obsB, \obsB') [ ( \w_{\obsA\obsB} = \w_{\obsA\obsB'}) \implies \obsB = \obsB' ]$.
\label{lem:aux}
\end{lem}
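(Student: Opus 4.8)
The plan is to prove the implication directly. Fix inertial observers $\obsA$, $\obsB$, $\obsB'$ and assume $\w_{\obsA\obsB} = \w_{\obsA\obsB'}$; the goal is to deduce $\obsB = \obsB'$. The natural tool for an equality between observers is \Ax{ExtIOb}, which reduces $\obsB = \obsB'$ to the claim that $\obsB$ and $\obsB'$ coordinatise exactly the same events at every location. So it suffices to establish $\ev_{\obsB}(q) = \ev_{\obsB'}(q)$ for every $q \in \Q^4$.

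To that end, fix an arbitrary $q$. The key step is to produce a location $p$ in $\obsA$'s coordinate system witnessing the same event that $\obsB$ sees at $q$, and this is precisely what \Ax{Ev} supplies: applied to the ordered pair $(\obsB,\obsA)$ it yields $(\forall q)(\exists p)[\ev_{\obsB}(q) = \ev_{\obsA}(p)]$, so we may choose such a $p$. By the definition of the worldview transformation, the equation $\ev_{\obsA}(p) = \ev_{\obsB}(q)$ is exactly the assertion that the pair $(p,q)$ lies in the relation $\w_{\obsA\obsB}$.

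Now the hypothesis does its work: since $\w_{\obsA\obsB} = \w_{\obsA\obsB'}$ as relations on $\Q^4$, the pair $(p,q)$ also lies in $\w_{\obsA\obsB'}$, that is, $\ev_{\obsA}(p) = \ev_{\obsB'}(q)$. Chaining the two event equalities gives $\ev_{\obsB}(q) = \ev_{\obsA}(p) = \ev_{\obsB'}(q)$. As $q$ was arbitrary, we obtain $(\forall q)[\ev_{\obsB}(q) = \ev_{\obsB'}(q)]$, and \Ax{ExtIOb} then yields $\obsB = \obsB'$, as required.

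The argument is short, and the only real subtlety — where I expect the main obstacle to lie — is the indispensability of \Ax{Ev}. The equality $\w_{\obsA\obsB} = \w_{\obsA\obsB'}$ only constrains $\obsB$ and $\obsB'$ at those locations $q$ that correspond, under $\obsA$'s worldview, to some $p$; at a location $q$ where $\obsB$ sees an event that no $p$ of $\obsA$ sees, the hypothesis would say nothing, and $\obsB$ could genuinely differ from $\obsB'$ there. \Ax{Ev} closes exactly this gap by guaranteeing that every event $\obsB$ observes is observed by $\obsA$ somewhere, making the comparison through $\w_{\obsA\obsB}$ total. One should be careful to invoke \Ax{Ev} for the pair $(\obsB,\obsA)$ rather than $(\obsA,\obsB)$, so that the existential witness produced is the sought-after $p$ on $\obsA$'s side.
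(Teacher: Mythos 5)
Your proof is correct and follows essentially the same route as the paper's: fix an arbitrary location on $\obsB$'s side, use \Ax{Ev} (in the direction from $\obsB$ to $\obsA$) to produce a matching location $p$ for $\obsA$, transfer the resulting pair through the hypothesis $\w_{\obsA\obsB} = \w_{\obsA\obsB'}$, and conclude with \Ax{ExtIOb}. Your closing remark about why \Ax{Ev} is indispensable is a correct and worthwhile observation, though the paper does not make it explicit.
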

\begin{proof}
Suppose $\w_{\obsA\obsB} = \w_{\obsA\obsB'}$, and choose any $p \in \Q^4$. Let $q \in \Q^4$ satisfy $\ev_{\obsA}(q) = \ev_{\obsB}(p)$, so that $\w_{\obsA\obsB}(q,p)$ holds ($q$ exists by \Ax{Ev}). Since $\w_{\obsA\obsB} = \w_{\obsA\obsB'}$, it follows that $\w_{\obsA\obsB'}(q,p)$ also holds, so that $\ev_{\obsB}(p) = \ev_{\obsA}(q) = \ev_{\obsB'}(p)$. This shows that $\obsB$ and $\obsB'$ see the same events at every $p \in \Q^4$, whence it follows by \Ax{ExtIOb} that $\obsB = \obsB'$, as claimed.
\end{proof}

\bigskip
\begin{proof}[\textbf{Proof of Theorem 
\ref{thm:not1(+)-2}}
$\SPR^+  \centernot\Longrightarrow  \SPRM$, $\SPRBIOb  \centernot\Longrightarrow  \SPRM$, $\SPRBIOb  \centernot\Longrightarrow   \SPR^+$]
~

\medskip
Constructing the required counterexamples in detail from scratch would be
too lengthy and technical, and would obscure the key ideas explaining why
such models exist.  Accordingly, here we give only the `recipes' on how to
construct these models.

To prove that $\SPR^+$ does not imply \SPRM, let  $\Model^-$ be any model satisfying
\SPRm and \Ax{Ext}, and containing at least two inertial observers and a
body $b$ such that the worldlines of $b$ are distinct according to the two
observers.  Such models exist, see, e.g., the ones constructed in
\cite{Sze13}.  The use of \Ax{Ext} ensures that distinct bodies have
distinct worldlines. 
Let us now construct an extension $\Model$ of $\Model^-$ (violating
\Ax{Ext}) by adding uncountably-infinite many copies of body $b$, as well as
countably-infinite many copies of every other body.  
Clearly, $\Model$ 
does not satisfy \SPRM, since this would require the existence of an
automorphism taking a body having uncountably many copies to one having only
countably many copies.  Nonetheless, 
$\Model$ satisfies $\SPR^+$ 
 since
it can be elementarily extended to an even larger model $\Model^+$ 
satisfying  \SPRp
(and hence, by Thm.~\ref{thm:2-1.+}, $\SPR^+$) by increasing the population
of other bodies so that every body has an equal (uncountable) number of
copies (see \cite[Theorem 2.8.20]{JuditPhD}). 
Thus  $\Model$ satisfies $\SPR^+$ but not \SPRM.
In more detail: 
Let $\Model^+$ be an extension of $\Model$ obtained
by increasing the population of bodies so that every body has an equal 
(uncountable) number of copies. We will use the Tarski-Vaught test 
\citep[Prop.2.3.5, p.45]{Mar02} to show that $\Model^+$ is an
elementary extension of $\Model$. Let $\phi(v,w_1,\ldots,w_n)$ be a formula 
and suppose $a_1,\ldots,a_n$ in $\Model$ and $d^+$ in $\Model^+$ 
satisfy 
$\Model^+\vDash\phi(d^+,a_1,\ldots,a_n)$. We have to
find a $d$ in $\Model$ such that
$\Model^+\vDash\phi(d,a_1,\ldots,a_n)$. 
If $d^+$ is not a body, then $d^+$ is already in $\Model$
since we extended $\Model$ only by bodies.
Assume, then, that $d^+$ is a body. 
Then $d^+$ has infinitely many copies in $\Model$, so we can choose
$d$, a copy of $d^+$ in $\Model$, such that
$d\not\in\{a_1,\ldots,a_n\}$.
Let $\alpha$ be any automorphism of $\Model^+$
which interchanges $d$ and $d^+$ and leaves every other element
fixed. Then $\alpha(a_1)=a_1,\ldots,\alpha(a_n)=a_n$ and $\alpha(d^+)=d$.
By $\Model^+\vDash\phi(d^+,a_1,\ldots,a_n)$, we have that
$\Model^+\vDash\phi(\alpha(d^+),\alpha(a_1),\ldots,\alpha(a_n))$. Thus
$\Model^+\vDash\phi(d,a_1,\ldots, a_n)$ as required.

To prove that \SPRBIOb does not imply \SPRM or $\SPR^+$, let $\mathfrak{M}$
be any model of $\SPRBIOb$ and $\Ax{Ext}$ containing at least two inertial
observers $k$ and $h$ and a body $b$ for which $\wl_k(b)=\{(0,0,0,t) : t\in
\Q \}\neq\wl_h(b)$.  Such models exist, see, e.g., \citep{Sze13}. 
Duplicating body $b$ leads to a model in which $\SPRBIOb$ is still satisfied
since duplicating a body does not change the possible worldlines but it
violates both \SPRM (the automorphism taking one inertial observer to
another cannot take a body having only one copy to one having two) and
$\SPR^+$ (since scenario $\varphi(m)\equiv (\forall
b,c)[\wl_m(b)=\wl_m(c)=\{(0,0,0,t) : t\in \Q \}\implies b=c]$ holds for $h$
but does not hold for $k$).  \end{proof}

\begin{proof}[\textbf{Proof of Theorem 
\ref{thm:2-1.+}}
$ \SPRM \Implies \Model \vDash \SPR^+$]
~

\medskip
Suppose \SPRM, 
so that for any observers $\obsA$ and $\obsB$ there is
an automorphism $\alpha\in Aut(\Model)$ such that $\alpha(\obsA)=\obsB$ and
$\alpha$ leaves elements of all sorts of \MATHCAL{L} representing 
mathematical objects fixed. 

We will prove that \Law\phi holds for all $\phi \in \Scenarios$, \ie given
any observers $\obsA$ and $\obsB$, any scenario $\phi$, and any set
$\bar{x}$ of parameters for $\phi$, we have \begin{equation}
  \phi(\obsA,\bar{x}) \iff \phi(\obsB,\bar{x}).  \label{eq:law}
\end{equation}

To prove this, choose some $\alpha \in Aut(\Model)$ which fixes $\Q$ 
and all the other sorts representing mathematical objects,
and satisfies $\alpha(\obsA) = \obsB$.

Suppose $\Model \vDash \phi(\obsA,\bar{x})$. Since $\alpha$ is an automorphism, $\phi(\alpha(\obsA),\alpha(\bar{x}))$ also holds in \Model. But $\alpha(\obsA) = \obsB$ and $\alpha(\bar{x}) = \bar{x}$, so this says that $\Model \vDash \phi(\obsB,\bar{x})$.  Conversely, if $\phi(\obsB,\bar{x})$ holds in \Model, then so does $\phi(\obsA,\bar{x})$, by symmetry.
\end{proof}

\bigskip

\begin{proof}[\textbf{Proof of Theorem 
\ref{thm:2-3}}
$ \SPRM \Implies \Model \vDash  \SPRBIOb $]
~

\medskip
Suppose \SPRM. Then $(\forall \obsA, \obsB)(\exists \alpha \in Aut(\Model))[\alpha(\obsA) = \obsB \land \alpha\mathclose{\upharpoonright_Q} = Id_Q]$. We wish to prove that 
\Model satisfies
\begin{eqnarray*}
  \SPRIOb &\equiv& (\forall \obsA,\obsA')(\forall \obsB)(\exists \obsB')[ \w_{\obsA\obsB} = \w_{\obsA'\obsB'}
], \\
  \SPRB &\equiv& (\forall \obsA,\obsA')(\forall b)(\exists b') [ \wl_{\obsA}(b) = \wl_{\obsA'}(b')
]. 
\end{eqnarray*}

Recall that whenever $\alpha$ is an automorphism of \Model and $R$ is a defined $n$-ary relation on \Model, we have
\begin{equation}
  R(v_1, \dots, v_n) \Iff R( \alpha(v_1), \dots, \alpha(v_n) ).
  \label{eqn:R}
\end{equation}

\underline{Proof of \SPRIOb.}
Choose any $\obsA, \obsA', \obsB$. We need to find $\obsB'$ such that $\w_{\obsA\obsB} = \w_{\obsA'\obsB'}$, so let $\alpha \in Aut(\Model)$ be some automorphism taking $\obsA$ to $\obsA'$, and define $\obsB' = \alpha(\obsB)$. Now it's enough to note that $\w_{\obsA\obsB}(p,q)$ is a defined 10-ary relation on \Model (one parameter each for $\obsA$ and $\obsB$, 4 each for $p$ and $q$), so that
\[
  \w_{\obsA\obsB}(p,q) \Iff \w_{\alpha(\obsA)\alpha(\obsB)}(\alpha(p),\alpha(q))
\]
holds for all $p, q $, by (\ref{eqn:R}). Substituting $\alpha(\obsA) = \obsA'$, $\alpha(\obsB) = \obsB'$, and noting that $\alpha$ leaves all spacetime coordinates fixed (because $\alpha\mathclose{\upharpoonright_Q} = Id_Q$) now gives
\[
  \w_{\obsA\obsB}(p,q) \Iff \w_{\obsA'\obsB'}(p,q)
\]
as required.

\underline{Proof of \SPRB.}
Choose any $\obsA, \obsA'$ and $b$. We need to find $b'$ such that $\wl_{\obsA}(b) = \wl_{\obsA'}(b')$. As before, let $\alpha \in Aut(\Model)$ be some automorphism taking $\obsA$ to $\obsA'$, define $b' = \alpha(b)$, and note that ``$p \in \wl_{\obsA}(b)$'' is a defined 6-ary relation on \Model. Applying (\ref{eqn:R}) now tells us that
\[
  \wl_{\obsA}(b) = \wl_{\alpha(\obsA)}(\alpha(b)) = \wl_{\obsA'}(b')
\]
as required.
\end{proof}

\bigskip
\begin{proof}[\textbf{Proof of Theorem 
\ref{thm:1.+-3.BIOb}}
$ \SPR^+ \cup \{ \Ax{Id}, \Ax{Ev}, \Ax{IB}, \Ax{Field} \} \vDash \SPRBIOb $]
~

\medskip
This is an immediate consequence of Theorems \ref{thm:1.S-3.IOb} and 
\ref{thm:1.S-3.B}
\end{proof}

\begin{proof}[\textbf{Proof of Theorem 
\ref{thm:1.S-3.IOb}}
$ \SPR(\psi, \widetilde\psi) \cup \{ \Ax{Id}, \Ax{Ev} \} \vDash \SPRIOb $ for some $\psi, \widetilde\psi$]
~

\medskip

Given a 5-tuple of locations $\vec{x}_i = (x_0, x_1, x_2, x_3, x_4)$, let $\match$ (see \figref{fig:match}) 
be the relation
\[
 \match(\obsB,\obsA,\vec{x}_i) \equiv
  (\forall i \in \{0,\dots,4\})(\ev_{\obsB}(e_i) = \ev_{\obsA}(x_i))
\]
and define $\psi, \widetilde\psi \in \Scenarios$ by
\begin{eqnarray*}
\psi(\obsA,p,q,\vec{x}_i) &\equiv& 
    (\exists \obsB) [ (\ev_{\obsB}(p) = \ev_{\obsA}(q))  
     \land \match(\obsB,\obsA,\vec{x}_i) ],
\\
  \widetilde\psi(\obsA,p,q,\vec{x}_i) &\equiv& 
    (\exists \obsB) [ (\ev_{\obsB}(p) \neq \ev_{\obsA}(q))  
     \land \match(\obsB,\obsA,\vec{x}_i) ].
\end{eqnarray*}
\begin{figure}
\centering
\psfrag{h}[r][r]{$h$}
\psfrag{k}[r][r]{$k$}
\psfrag{h1}[l][l]{$h$}
\psfrag{e1}[rt][rt]{$e_1$}
\psfrag{e2}[lt][lt]{$e_2$}
\psfrag{e3}[b][b]{$e_3$}
\psfrag{e0}[rb][rb]{$e_0$}
\psfrag{x0}[x0][b]{$x_0$}
\psfrag{x1}[rb][rb]{$x_1$}
\psfrag{x2}[b][b]{$x_2$}
\psfrag{x3}[lt][lt]{$x_3$}
\includegraphics[keepaspectratio, width=0.7\textwidth]{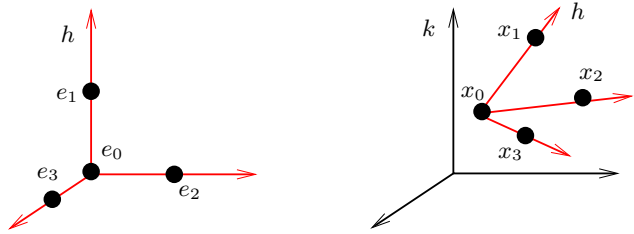}
\caption{Schematic showing the behaviour of function $\match$, which tells us which locations in \obsA's worldview correspond to the principle locations in \obsB's worldview.}
\label{fig:match}
\end{figure}

Choose any $\obsA, \obsB, \obsA'$. In order to establish \SPRIOb we need 
to demonstrate some $\obsB' $ such that $\w_{\obsA\obsB} = \w_{\obsA'\obsB'}$. To do this, let $x_i$ be such that $\ev_{\obsA}(x_i) = \ev_{\obsB}(e_i)$ for all $i = 0, \dots, 4$ (these exist by \Ax{Ev}).

Then, in particular, $\psi(\obsA,e_0,x_0,\vec{x}_i) \equiv (\exists \obsB) [ (\ev_{\obsB}(e_0) = \ev_{\obsA}(x_0)) \land \match(\obsB,\obsA,\vec{x}_i) ]$ holds. Since $\SPR(\psi,\widetilde\psi)$, it follows that $\psi(\obsA',e_0,x_0,\vec{x}_i)$ also holds, so there is some $\obsB'$ 
satisfying
\begin{equation}
  \match(\obsB',\obsA',\vec{x}_i). 
  \label{eq:obsB'}
\end{equation}

Now choose any $p$ and $q$. We will show that $\ev_{\obsB}(p) = \ev_{\obsA}(q)$ if and only if $\ev_{\obsB'}(p) = \ev_{\obsA'}(q)$, whence $\w_{\obsA\obsB} = \w_{\obsA'\obsB'}$, as claimed.

\textbf{Case 1:} 
Suppose $\ev_{\obsB}(p) = \ev_{\obsA}(q)$.
In this case, $\psi(\obsA,p,q,\vec{x}_i)$ holds, and we need to prove that $\ev_{\obsB'}(p) = \ev_{\obsA'}(q)$. It follows from $\SPR(\psi, \widetilde\psi)$ that $\psi(\obsA',p,q,\vec{x}_i)$ also holds, \ie there exists $\obsB''$ 
satisfying
\begin{equation}
(\ev_{\obsB''}(p) = \ev_{\obsA'}(q)) \land \match(\obsB'',\obsA',\vec{x}_i).
\label{eq:obsB''}
\end{equation}
It follows from (\ref{eq:obsB'}) that 
\[
\match(\obsB',\obsA',\vec{x}_i) \land \match(\obsB'',\obsA',\vec{x}_i)
\]
holds, \ie
\[
  \ev_{\obsB'}(e_i) = \ev_{\obsA'}(x_i) = \ev_{\obsB''}(e_i)
\]
holds for all $i = 0, \dots, 4$. By \Ax{Id} it follows that $(\forall r)(\ev_{\obsB'}(r) = \ev_{\obsB''}(r))$.

It now follows from (\ref{eq:obsB''}) that
\[
\ev_{\obsB'}(p) = \ev_{\obsB''}(p) = \ev_{\obsA'}(q),
\]
\ie $\ev_{\obsB'}(p) = \ev_{\obsA'}(q)$, as required.

\textbf{Case 2:} 
Suppose $\ev_{\obsB}(p) \neq \ev_{\obsA}(q)$.
In this case, $\widetilde\psi(\obsA,p,q,\vec{x}_i)$ holds, and we need to prove that $\ev_{\obsB'}(p) \neq \ev_{\obsA'}(q)$. It follows from $\SPR(\psi, \widetilde\psi)$ that $\widetilde\psi(\obsA',p,q,\vec{x}_i)$ also holds, \ie there exists $\obsB''$ 
satisfying
\begin{equation}
(\ev_{\obsB''}(p) \neq \ev_{\obsA'}(q)) \land
\match(\obsB'',\obsA',\vec{x}_i).
\label{eq:obsB''not}
\end{equation}
As before, it follows from (\ref{eq:obsB'}) that $\ev_{\obsB'}(e_i) = \ev_{\obsA'}(x_i) = \ev_{\obsB''}(e_i)$ holds for all $i = 0, \dots, 4$, and hence by \Ax{Id} that $(\forall r)(\ev_{\obsB'}(r) = \ev_{\obsB''}(r))$.

It now follows from (\ref{eq:obsB''not}) that
\[
\ev_{\obsB'}(p) = \ev_{\obsB''}(p) \neq \ev_{\obsA'}(q),
\]
\ie $\ev_{\obsB'}(p) \neq \ev_{\obsA'}(q)$, as required.
\end{proof}

\bigskip
\begin{proof}[\textbf{Proof of Theorem 
\ref{thm:1.S-3.B}}
$  \SPR(\xi) \cup \{ \Ax{IB}, \Ax{Field} \} \vDash \SPRB $ for some $\xi$]
~

\medskip
We define $\xi \in \Scenarios$ by $
  \xi(\obsA,p,q) \equiv (\exists b)(p, q \in \wl_{\obsA}(b)) .
$

To see that this satisfies the theorem, suppose that \Ax{IB} and \Ax{Field} both hold, and choose any $\obsA, \obsA' \in \IOb$ and any $b \in \B$. 
We will demonstrate a body $b'$ satisfying 
$\wl_{\obsA}(b) = \wl_{\obsA'}(b')$, whence \SPRB holds, as claimed. 

According to \Ax{IB}, there exist points $p_{\obsA},q_{\obsA}\in\Q^4$, where $q_{\obsA} \neq e_0$, and
\[
  \wl_{\obsA}(b) = \{ p_{\obsA} + \lambda q_{\obsA} : \lambda \in \Q \} .
\]
In particular, therefore, the points $p = p_{\obsA}$ and $q = p_{\obsA} + q_{\obsA}$ are distinct elements of the straight line $\wl_{\obsA}(b)$.

This choice of $p$ and $q$ ensures that the statement $\xi(\obsA,p,q)$ holds. By $\SPR(\xi)$, it follows that $\xi(\obsA',p,q)$ also holds; \ie there is some 
body $b'$ such that $p, q \in \wl_{\obsA'}(b')$. By \Ax{IB}, $\wl_{\obsA'}(b')$ is also a straight line.

It follows that $\wl_{\obsA}(b)$ and $\wl_{\obsA'}(b')$ are both straight lines containing the same two distinct points $p$ and $q$. Since there can be at most one such line (by \Ax{Field}) it follows that $\wl_{\obsA}(b) =\wl_{\obsA'}(b')$, as claimed.
\end{proof}

\bigskip
\begin{proof}[\textbf{Proof of Theorem 
\ref{thm:3-2}}
If $\MATHCAL{L} = \MATHCAL{L}_0$, then $\Model \vDash \SPRBIOb \cup \{  \Ax{Ev}, \Ax{Ext} \}$ $\Implies$ $\SPRM$]
~

\medskip
Choose any $\obsA, \obsA'$. We need to demonstrate an automorphism $\alpha \in Aut(\Model)$ which is the identity on $\Q$ and satisfies $\alpha(\obsA) = \obsA'$.

\begin{itemize}
\item \underline{Action of $\alpha$ on $\IOb$}:
Suppose $\obsB \in \IOb$. According to \SPRIOb, there exists some $\obsB'$ such that $\w_{\obsA\obsB} = \w_{\obsA'\obsB'}$. By Lemma \ref{lem:aux}, this $\obsB'$ is uniquely defined (since we would otherwise have distinct $\obsB', \obsB''$ satisfying $\w_{\obsA'\obsB'} =\w_{\obsA'\obsB''}$). Define $\alpha(\obsB) = \obsB'$.

\item \underline{Action of $\alpha$ on $\B$}:
Suppose $b \in \B$. According to \SPRB, there exists some $b'$ such that $\wl_{\obsA}(b) = \wl_{\obsA'}(b')$. By \Ax{ExtB}, this $b'$ is uniquely defined. Define $\alpha(b) = b'$.

\item \underline{Action on $\Q$}: Define $\alpha \mathclose{\upharpoonright_Q} = Id_Q$. Notice that this forces $\alpha(p) = p$ for all $p \in \Q^4$.
\end{itemize}

We already know that $\alpha$ fixes $\Q$. It remains only to prove that $\alpha \in Aut(\Model)$, and that $\alpha(\obsA) = \obsA'$.

\medskip
\underline{Proof that $\alpha(\obsA) = \obsA'$}:  Recall first that for any equivalence relation $R$, it is the case that $R = R \circ R^{-1}$, and that given any $\obsA,\obsA',\obsA''$, we have 
\begin{itemize}
\item $\w_{\obsA\obsA}$ is an equivalence relation;
\item $\w_{\obsA\obsA'}^{-1} = \w_{\obsA'\obsA}$; and
\item $\w_{\obsA\obsA'} \circ \w_{\obsA'\obsA''} = \w_{\obsA\obsA''}$ (by \Ax{Ev} and (\w.def)).
\end{itemize}

By construction, we have $\w_{\obsA\obsA} = \w_{\obsA'\alpha(\obsA)}$, whence $\w_{\obsA'\alpha(\obsA)}$ is an equivalence relation. It now follows that
\begin{eqnarray*}
  \w_{\obsA'\alpha(\obsA)} 
    &=& \w_{\obsA'\alpha(\obsA)} \circ \w_{\obsA'\alpha(\obsA)}^{-1} \\
    &=& \w_{\obsA'\alpha(\obsA)} \circ \w_{\alpha(\obsA)\obsA'} \\
    &=& \w_{\obsA'\obsA'} .
\end{eqnarray*}
and hence, by Lemma \ref{lem:aux}, that $\alpha(\obsA) = \obsA'$, as required.

\medskip
\underline{Proof that $\alpha \in Aut(\Model)$}:
We know that $\wl_{\obsA}(b) = \wl_{\obsA'}(\alpha(b)) = \wl_{\alpha(\obsA)}(\alpha(b))$, or in 
other words, given any $b$ and $q \in \Q^4$,
\begin{equation}
  b \in \ev_{\obsA}(q) \Iff \alpha(b) \in \ev_{\alpha(\obsA)}(q) .
\label{eq:*}
\end{equation}
We wish to prove $\W(\obsB,b,p) \iff \W(\alpha(\obsB),\alpha(b),p)$ for all $\obsB$, $b$ and $p$ (recall that $\alpha(p) = p$). This is equivalent to proving
\begin{equation}
b \in \ev_{\obsB}(p) \Iff \alpha(b) \in \ev_{\alpha(\obsB)}(p) .
\label{eq:Delta}
\end{equation}
Choose any $h,b,p$.
Let $q \in \Q^4$ satisfy $\ev_{\obsA}(q) = \ev_{\obsB}(p)$ -- such a $q$ exists by \Ax{Ev}. Then 
\begin{equation}
  \ev_{\alpha(\obsA)}(q) = \ev_{\alpha(\obsB)}(p)
\label{eq:Cat}
\end{equation}
because $\w_{\obsA\obsB} = \w_{\obsA'\alpha(\obsB)}= \w_{\alpha(\obsA)\alpha(\obsB)}$ by the definition of $\alpha(\obsB)$ and the fact, established above, that $\obsA' = \alpha(\obsA)$.

To prove (\ref{eq:Delta}), let us first assume that $b \in \ev_{\obsB}(p)$. Then $b \in \ev_{\obsA}(q)$, so $\alpha(b) \in \ev_{\alpha(\obsA)}(q)$ by (\ref{eq:*}). Then, by (\ref{eq:Cat}) we have $\alpha(b) \in \ev_{\alpha(\obsB)}(p)$. Next, choose $b \not\in \ev_{\obsB}(p) = \ev_{\obsA}(q)$. Then $\alpha(b) \not\in \ev_{\alpha(\obsA)}(q) = \ev_{\alpha(\obsB)}(p)$ by (\ref{eq:*}) and (\ref{eq:Cat}). It follows that $b \in \ev_{\obsB}(p) \Iff \alpha(b) \in \ev_{\alpha(\obsA)}(q)$, as required.

\bigskip
It remains to show that $\alpha$ is a bijection.

\medskip
\underline{Proof of injection}:
\begin{itemize}
\item \textit{Observers}: Suppose $\alpha(\obsB) = \alpha(\obsB')$. 
Then, by the definition of $\alpha$, we have
\[
  \w_{\obsA\obsB} = \w_{\obsA'\alpha(\obsB)} = \w_{\obsA'\alpha(\obsB')} =
\w_{\obsA\obsB'}
\]
and now $\obsB = \obsB'$ by Lemma \ref{lem:aux}
\item \textit{Bodies}: Suppose $\alpha(b) = \alpha(c)$. By definition of $\alpha$ we have
\[
  \wl_{\obsA}(b) = \wl_{\obsA'}(\alpha(b)) = \wl_{\obsA'}(\alpha(c)) = \wl_{\obsA}(c)
\]
and now $b = c$ follows by \Ax{ExtB}.
\end{itemize}

\underline{Proof of surjection}: We need for every $\obsB', b'$ that there are $\obsB, b$ satisfying $\alpha(\obsB) = \obsB'$ and $\alpha(b) = b'$.

\begin{itemize}
\item \textit{Observers}: Let $\obsB' \in \IOb$. By \SPRIOb there exists $\obsB$ such that $\w_{\obsA'\obsB'} = \w_{\obsA\obsB}$, and now $\obsB' = \alpha(\obsB)$ for any such $\obsB$.
\item \textit{Bodies}: Let $b' \in \B$. By \SPRB there exists $b \in \B$ such that $\wl_{\obsA'}(b') = \wl_{\obsA}(b)$, and now $b' = \alpha(b)$ for any such $b$.
\end{itemize}

This completes the proof.
\end{proof}

\begin{proof}[\textbf{Proof of Theorem \ref{thm:1.S-3.Bii}}
$(\wlDefM\text{\rm\ and }\Model\vDash\SPR^+)$  $\Implies$ 
	$\Model\vDash\SPRB$]
~

\medskip
Assume  $\Model\vDash\SPR^+$ and \wlDefM.
Choose any $\obsA, \obsA' \in \IOb$ and any $b \in \B$.
We will demonstrate a body $b'$ satisfying  $\wl_{\obsA}(b) = 
\wl_{\obsA'}(b')$.

Let $\varphi(\bar y,\bar x)\equiv \varphi(y_1,y_2,y_3,y_4,x_1,\ldots, x_n)$ 
be a formula such that all the free variables $\bar y,\bar x$ of $\varphi$ 
are of sort \Q, and choose
$\bar a\in\Q^n$ such that
\[
\wl_k(b)\equiv \{q\in\Q^4 :
\Model\vDash\varphi(q,\bar a)\},
\]
Such $\varphi$ and $\bar a$ exist by \wlDefM.

We define $\psi\in\Scenarios$ by
$\psi(\obsB,\bar x)\equiv
(\exists c)(\forall q)[q\in\wl_{\obsB}(c)\leftrightarrow
\varphi(q,\bar x)]$.
Clearly, $\Model\vDash\psi(\obsA,\bar a)$. Then, by $\SPR^+$,
$\Model\vDash\psi(\obsA',\bar a)$. Thus, there is $b'\in B$ such
that
$\wl_{\obsA'}(b')\equiv\{q\in\Q^4 : \Model\vDash \varphi(q,\bar a)\}$, and 
 $\wl_\obsA(b)=\wl_{\obsA'}(b')$ for this $b'$.
\end{proof}

\begin{proof}[\textbf{Proof of Theorem \ref{thm:1.+-3.BIObii}} 
$(\wlDefM\text{\rm\ and }\Model\vDash\SPR^+ \cup \{ \Ax{Id},
\Ax{Ev}\})$ $\Implies$\\  $\Model\vDash\SPRBIOb$]
~

\medskip
This is an immediate consequence of 
Theorems \ref{thm:1.S-3.IOb} and \ref{thm:1.S-3.Bii}
\end{proof}

\begin{proof}[\textbf{Proof of Theorem 
\ref{Axwlthm-1}}
$ \SPR^+ \cup \{ \Ax{Id}, \Ax{Ev}, \Axwl{n} \} \vDash \SPRBIOb $]
~

\medskip
This is an immediate consequence of Theorems \ref{thm:1.S-3.IOb} and 
\ref{Axwlthm-2}
\end{proof}

\begin{proof}[\textbf{Proof of Theorem \ref{Axwlthm-2}} $\SPR(\xi) \cup 
\{ \Axwl{n} \} \vDash \SPRB $ for some $\xi$]
~
\medskip

We define $\xi \in \Scenarios$ 
by $\xi(\obsA,p^1,\ldots,p^n) 
\equiv (\exists b)(p^1,\ldots,p^n \in \wl_{\obsA}(b))$. It is easy to check 
that $\xi$ satisfies the theorem, and we omit the details.
\end{proof}

\section{Discussion and Conclusions}
\label{sec:discussion}

In this paper we have shown formally that adopting different viewpoints can lead to different, but equally `natural', formalisations of the special principle of relativity. The idea that different formalisations exist is, of course, not new, but the advantage of our approach is that we can investigate the formal relationships between different formalisations, and deduce the conditions under which equivalence can be restored.

We have shown, in particular, that the model-based interpretation of the principle, \SPRM, is strictly stronger than the alternatives $\SPR^+$ and \SPRBIOb, and have identified various counterexamples to show that the three approaches are not, in general equivalent.  
On the other hand, equivalence is restored in the presence of various
axioms. We note, however, that the following question remains open, since it is unclear whether $\SPR^+$ is enough, in its  own right, to entail \SPRBIOb.

\begin{conjecture}
$\SPR^+ \centernot\Implies \SPRB$.
\end{conjecture}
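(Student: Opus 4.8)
The plan is to exhibit a single model $\Model$ (over $\mathcal{L}_0$, or a mild expansion thereof) with $\Model \vDash \SPR^+$ but $\Model \not\vDash \SPRB$. Theorem~\ref{thm:1.S-3.Bii} tells us exactly where to look: since $(\wlDefM\text{ and }\Model\vDash\SPR^+) \Implies \Model\vDash\SPRB$, any such $\Model$ must violate \wlDefM, and since $\Model\vDash\{\Ax{IB},\Ax{Field}\} \Implies \wlDefM$ it must also fail \Ax{IB} (so accelerated or ``wild'' bodies are permitted, while the field of quantities can stay as rich as we like). Concretely, the counterexample must contain an observer $\obsA$ and a body $b$ whose worldline $L := \wl_\obsA(b) \subseteq \Q^4$ is \emph{not} parametrically definable over $\Q$. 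This is precisely the feature scenarios cannot exploit: a scenario can assert the existence of a body passing through finitely many, or through any \emph{definable} configuration of, quantity-locations, but it cannot pin down a non-definable set. So the strategy is to realise $L$ in $\obsA$'s worldview while arranging that a second observer $\obsA'$ has \emph{no} body with worldline exactly $L$, yet agrees with $\obsA$ on every scenario.

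Concretely, I would start from a model of full special relativity over a suitably large field (satisfying \SPRM, hence $\SPR^+$ and \SPRB), pass to a large saturated elementary extension, and take $L$ to be a non-definable subset of $\Q^4$ arising there (for instance the worldline of a non-standard accelerated body, or a generic non-definable set secured by an omitting-types argument). I would then perform surgery on the $\B$-sort: for $\obsA'$, retain bodies realising \emph{every} definable worldline and every finite partial configuration of points of $L$, while deleting any body whose $\obsA'$-worldline equals $L$ exactly; for $\obsA$, keep a body with worldline $L$. By construction $\SPRB$ fails for the triple $(\obsA,\obsA',b)$, since no $b'$ satisfies $\wl_{\obsA'}(b') = L$.

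The crux, and the main obstacle, is verifying that $\Model \vDash \SPR^+$, \ie that $\obsA$ and $\obsA'$ satisfy exactly the same scenarios for every evaluation of the quantity-parameters. The danger is that some scenario $\phi(\obsA,\bar x) \equiv (\exists b)(\forall q)[q \in \wl_\obsA(b) \iff \chi(q,\bar x)]$ might indirectly capture $L$; here the non-definability of $L$ is precisely what rules this out, since for any $\mathcal{L}$-formula $\chi$ and any $\bar a \in \Q^n$ the set $\{q : \Model \vDash \chi(q,\bar a)\}$ is definable and hence $\neq L$, so no such scenario distinguishes $\obsA$ from $\obsA'$. Promoting this observation to \emph{all} scenarios -- including those that quantify over bodies and observers in deeply nested ways -- is where the real work lies; I expect it to require a back-and-forth (Ehrenfeucht--Fra\"{\i}ss\'e) argument showing that $\obsA$ and $\obsA'$ are indistinguishable by first-order formulas with quantity-parameters, the sole ``difference'' between their configurations being the essentially non-definable datum of whether $L$ is realised. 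Making this indistinguishability airtight under arbitrary quantifier nesting -- while keeping the deletion of bodies from $\obsA'$ localised enough not to perturb any definable scenario -- is the delicate heart of the proof, and is presumably why the statement is left as a conjecture.
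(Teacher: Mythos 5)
This statement is left as an open \emph{conjecture} in the paper -- the authors explicitly say the question ``remains open'' -- so there is no proof of it in the paper to compare against. Your proposal is a reasonable strategy sketch, and its opening observation is correct and useful: by Theorem~\ref{thm:1.S-3.Bii} any counterexample must violate \wlDefM, hence must contain an observer--body pair whose worldline is not parametrically definable over \Q. But the proposal is not a proof, and the gap is exactly at the step you yourself flag as ``the delicate heart.'' The argument that no scenario can distinguish $\obsA$ from $\obsA'$ is only carried out for scenarios of the special syntactic shape $(\exists b)(\forall q)[q\in\wl_k(b)\iff\chi(q,\bar x)]$, where non-definability of $L$ does the work. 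Scenarios in \Scenarios are arbitrary first-order formulas with one free \IOb-variable and \Q-parameters: they can count bodies, compare and intersect worldlines of several quantified bodies, and quantify over other observers. Such formulas can detect structural facts about the \emph{family} of worldlines visible to an observer (for example, intersection patterns or multiplicities) without ever defining $L$ itself, and your plan to retain ``every finite partial configuration of points of $L$'' for $\obsA'$ only protects existential-positive assertions, not negative or counting ones. The promised Ehrenfeucht--Fra\"{\i}ss\'e argument is not supplied, and it cannot be routine: any genuine back-and-forth system fixing \Q and matching $\obsA$ with $\obsA'$ would come dangerously close to yielding the very body $b'$ with $\wl_{\obsA'}(b')=L$ whose absence you are trying to arrange, so the finite-approximation bookkeeping is where the whole difficulty of the conjecture lives.

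There is also a conceptual wobble in the construction itself. Bodies are elements of the sort \B shared by all observers, so one cannot ``delete a body for $\obsA'$ while keeping it for $\obsA$''; what must be modified is the worldview relation \W. The body $b$ with $\wl_\obsA(b)=L$ must then be assigned \emph{some} $\obsA'$-worldline (possibly empty), and whichever choice is made creates a new feature of $\obsA'$'s worldview that must in turn be checked against every scenario (e.g.\ $(\exists b)(\forall p)\neg\W(k,b,p)$ detects a body with empty worldline). None of this is addressed. In short: the proposal identifies the right obstruction (non-definable worldlines) and the right necessary conditions on a counterexample, but the statement remains a conjecture precisely because the $\SPR^+$-verification you defer to a back-and-forth argument is the entire content of the problem.
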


An interesting direction for future research would be to investigate the extent to which our existing results can be strengthened by removing auxiliary axioms. For example, our proof that \SPRM can be recovered from $\SPR^+$ currently relies on  $\MATHCAL{L} = \MATHCAL{L}_0$, \Ax{Id}, \Ax{IB}, \Ax{Field}, \Ax{Ev} and \Ax{Ext}. While we know that \emph{some} additional axiom(s) must be required (since we have presented a counterexample showing that $\SPR^+ \centernot\Implies \SPRM$), the question remains whether we can develop a proof that works over \emph{any} language, \MATHCAL{L}, or whether the constraint  $\MATHCAL{L} = \MATHCAL{L}_0$ is required.   Again, assuming we allow the same auxiliary axioms, how far can we minimise the set \scenarios of scenarios while still entailing the equivalence between \SPRS and \SPRM?

\subsection*{Acknowledgement} The authors would like to thank the anonymous
referees for their insightful comments.

\bibliographystyle{rsl}
\bibliography{mssSPR}
\vspace*{10pt}
\address{ALFR\'ED R\'ENYI INSTITUTE OF MATHEMATICS\\ 
HUNGARIAN ACADEMY OF
SCIENCES \\
\hspace*{9pt}P.O. BOX 127 \\
\hspace*{18pt}BUDAPEST 1364, HUNGARY\\
{\it E-mail}: madarasz.judit@renyi.mta.hu, szekely.gergely@renyi.mta.hu\\
{\it URL}: http://www.renyi.hu/$\sim$madarasz, http://www.renyi.hu/$\sim$turms\\ [8pt]
DEPARTMENT OF COMPUTER SCIENCE \\ 
\hspace*{9pt}THE UNIVERSITY OF SHEFFIELD \\
\hspace*{18pt}211 PORTOBELLO, SHEFFIELD S1 4DP, UK \\
{\it E-mail}: m.stannett@sheffield.ac.uk}\\
{\it URL}: http://www.dcs.shef.ac.uk/$\sim$mps
\clearpage
\end{document}